\newif\ifConference
\newif\ifJournal
\newif\ifArXiv
\newif\ifAnonymous
\tikzstyle{bold}=[draw, line width=2pt]
\tikzstyle{optional}=[dashed]
\tikzstyle{path}=[decorate, decoration={snake, amplitude=.6mm}]
\tikzstyle{small}=[inner sep=2pt]
\tikzstyle{tiny}=[inner sep=1.7pt]
\tikzstyle{textnode}=[inner sep=0pt]
\tikzstyle{triangle}=[draw, regular polygon, regular polygon sides=3]
\tikzstyle{vertex}=[circle, draw, fill=white]
\tikzstyle{reti}=[vertex, fill=black]
\tikzstyle{leaf}=[vertex, rectangle]
\tikzstyle{leaf2}=[vertex, regular polygon, regular polygon sides=3]
\tikzstyle{smallvertex}=[vertex, small]
\tikzstyle{smallleaf}=[leaf, inner sep=3.3pt]
\tikzstyle{smallleaf2}=[leaf2, inner sep=1.7pt]
\tikzstyle{smalltriangle}=[triangle, inner sep=1.5pt]
\tikzstyle{smallreti}=[reti, small]
\tikzstyle{match}=[edge,line width=3pt]
\tikzstyle{edge}=[draw,-]
\tikzstyle{arc}=[draw,arrows={-Latex[length=6pt]}]
\tikzstyle{boldarc}=[draw, bold, arrows={-Latex[length=10pt]}]
\tikzstyle{revarc}=[draw, arrows={Latex[length=6pt]-}]
\tikzstyle{boldrevarc}=[draw, bold, arrows={Latex[length=10pt]-}]
\newcommand{\nextnode}[5][vertex]{\node[small#1] (#2) at ($(#3)+(#4)$) {} edge[#5] (#3);}
\newcommand{\myrowcolsnum}[1]{\rowcolors{#1}{gray!10!white}{white}}
\newcommand{\kommentar}[1]{}
\newcommand{\Oh}{\ensuremath{\mathcal{O}}}
\newcommand{\SSS}{\ensuremath{\mathcal{S}}}
\newcommand{\proofpara}[1]{\noindent\textit{#1.}}
\DeclareMathOperator{\twwithoutN}{{tw}}
\DeclareMathOperator{\off}{off}
\DeclareMathOperator{\DP}{DP}
\newcommand{\w}{{\ensuremath{\omega}}\xspace}
\newcommand\Recc[1]{Recurrence~(\ref{#1})}
\newcommand{\predators}[1]{{\ensuremath{N_{>}(#1)}}\xspace}
\newcommand{\prey}[1]{{\ensuremath{N_{<}(#1)}}\xspace}
\newcommand{\Wmax}{\ensuremath{W_{\max}}\xspace}
\newcommand{\yes}{{\normalfont{yes}}\xspace}
\newcommand{\no}{{\normalfont{no}}\xspace}
\newcommand{\Wh}[1]{{\normalfont W[#1]}\xspace}
\newcommand{\NP}{{\normalfont{NP}}\xspace}
\newcommand{\FPT}{{\normalfont{FPT}}\xspace}
\newcommand{\XP}{{\normalfont{XP}}\xspace}
\newcommand{\Instance}{{\ensuremath{\mathcal{I}}}\xspace}
\newcommand{\Tree}{{\ensuremath{\mathcal{T}}}\xspace}
\newcommand{\Food}{{\ensuremath{\mathcal{F}}}\xspace}
\newcommand{\tw}{{\ensuremath{\twwithoutN_{\Food}}}\xspace}
\newcommand{\PD}{\PDsub\Tree}
\newcommand{\PDsub}[1]{{\ensuremath{PD_{#1}}}\xspace}
\newcommand{\problemdef}[3]{
	\begin{quote}
		\normalsize\textsc{#1} \smallskip \\
		\begin{tabularx}{0.9\textwidth}{@{}l@{\hspace{3pt}}X}
			\normalsize\textbf{Input:}    & \normalsize#2 \\
			\normalsize\textbf{Question:} & \normalsize#3
		\end{tabularx}
	\end{quote}
}
\newcommand{\PROB}[1]{{{\normalfont\textsc{#1}}}\xspace}
\newcommand{\MPD}{\PROB{Max-PD}}
\newcommand{\MPDlong}{\PROB{Maximize Phylogenetic Diversity}}
\newcommand{\KP}{\PROB{Knapsack}}
\newcommand{\ILPF}{\PROB{ILP-Feasibility}}
	\author{UNKNOWN AUTHOR}{}{}{}{}
	\institute{UNKOWN INSTITUTE}
	\authorrunning{UNKNOWN}
	\author{
		Jannik Schestag\orcidID{0000-0001-7767-2970}
		\thanks{
			Funded by the Dutch Organisation for Scientific Research (NWO), 
			“Optimization for and with Machine Learning (OPTIMAL)”
			OCENW.GROOT.2019.015.
		}
	}
	\authorrunning{J. Schestag}
	\institute{
		TU Delft, The Netherlands.
		Electronic mail: \email{j.t.schestag@tudelft.nl}
	}
	\author{Jannik Schestag}{TU Delft, The Netherlands}{j.t.schestag@tudelft.nl}{https://orcid.org/0000-0001-7767-2970}{}
	\keywords{phylogenetic diversity; food webs; structural parameterization; dynamic programming}
	\authorrunning{J. Schestag}
\newcommand{\PDDlong}{\PROB{Optimizing PD with Dependencies}}
\newcommand{\PDD}{\PROB{\mbox{$\varepsilon$-PDD}}}
\newcommand{\sPDD}{\PROB{\mbox{$\varepsilon$-PDD$_{\text{s}}$}}}
\newcommand{\fPDD}{\PROB{\mbox{$1$-PDD}}}
\newcommand{\fsPDD}{\PROB{\mbox{$1$-PDD$_{\text{s}}$}}}
\newcommand{\hPDD}{\PROB{\mbox{$\nicefrac{1}{2}$-PDD}}}
\newcommand{\hsPDD}{\PROB{\mbox{$\nicefrac{1}{2}$-PDD$_{\text{s}}$}}}
\newcommand{\wPDD}{\PROB{\mbox{Weighted-PDD}}}
\newcommand{\rwPDDlong}{\PROB{\mbox{Restricted Weighted PDD}}}
\newcommand{\rwPDD}{\PROB{\mbox{rw-PDD}}}
\newcommand{\rwsPDD}{\PROB{\mbox{rw-PDD$_{\text{s}}$}}}
\newcommand{\gviable}{\mbox{$\gamma$-viable}\xspace}
\newcommand{\viable}{\mbox{$\epsilon$-viable}\xspace}
\newcommand{\fviable}{\mbox{$1$-viable}\xspace}
\DeclareMathOperator{\distclust}{cvd}
\DeclareMathOperator{\vc}{vc}
\newcommand{\todos}[2][]{\todo[#1,color=red!25!green!50]{ #2}}
\title{Weighted Food Webs Make Computing Phylogenetic Diversity So Much Harder}
\titlerunning{Phylogenetic Diversity with Weighted Food Webs} 
\begin{document}

\maketitle

\begin{abstract}
	\ifConference
	In a phylogenetic tree,
	\else
	Phylogenetic trees represent certain species and their likely ancestors.
	In such a tree,
	\fi
	present-day species are leaves and an edge from~$u$ to~$v$ indicates that $u$ is an ancestor of~$v$.
	Weights on these edges indicate the phylogenetic distance.
	The phylogenetic diversity (PD) of a set of species~$A$ is the total weight of edges that are on any path between the root of the phylogenetic tree and a species in~$A$.
	
	Selecting a small set of species that maximizes phylogenetic diversity for a given phylogenetic tree is an essential task in preservation planning, where limited resources naturally prevent saving all species.
	An optimal solution can be found with a greedy algorithm~[Steel, Systematic Biology, 2005; Pardi and Goldman, PLoS Genetics, 2005].
	However, when a food web representing predator-prey relationships is given, finding a set of species that optimizes phylogenetic diversity subject to the condition that each saved species should be able to find food among the preserved species is \NP-hard~[Spillner~et~al., IEEE/ACM, 2008].
	
	\looseness=-1
	We present a generalization of this problem, where, inspired by biological considerations, the food web has weighted edges to represent the importance of predator-prey relationships.
	We show that this version is \NP-hard even when both structures, the food web and the phylogenetic tree, are stars.
	To cope with this intractability, we proceed in two directions.
	Firstly, we study special cases where a species can only survive if a given fraction of its prey is preserved.
	Secondly, we analyze these problems through the lens of parameterized complexity.
	Our results include that finding a solution is fixed-parameter tractable with respect to the vertex cover number of the food web, assuming the phylogenetic tree is a star.
\end{abstract}

\section{Introduction}
The ongoing \emph{sixth mass extinction}~\cite{barnosky2011has,cowie2022sixth} presents a significant challenge to humanity.
\ifConference
T%
\else
From an ethical standpoint, t%
\fi 
here is a moral imperative to preserve species~\cite{kopnina}; moreover, maintaining biodiversity is also critical for human well-being~\cite{rands2010biodiversity,cardinale2012biodiversity}.

However, conservation efforts are constrained by limited political will, funding, and other resources, making it impossible to protect every species that is on the edge of extinction.
As a result, strategic decisions are made about which species to prioritize.
To provide biological evidence on how relevant the protection of a certain set of species (taxa) is, biologists developed the \emph{phylogenetic diversity} (PD) measure~\cite{FAITH1992}.
Given a phylogenetic tree---a directed tree where today's species are leaves and edges describe how related a species is to it's genetic parent---the phylogenetic diversity of a set of species~$A$ is the total weight of edges on paths from the root to species in~$A$.
Although phylogenetic diversity is not a perfect proxy for biological diversity~\cite{karanth2019phylogenetic},
it is the best approach to capturing the number of unique features represented in a species set~\cite{faith2016pd} and
has become the most widely used biodiversity measures~\cite{vellend2011measuring}.
In the~\MPDlong~(\MPD) problem, one is given a phylogenetic tree and a budget~$k$, and the goal is to select~$k$ species that maximize phylogenetic diversity~\cite{FAITH1992}.
A greedy algorithm optimally solves \MPD~\cite{FAITH1992,steel,Pardi2005}.
\ifJournal
Various generalizations of \MPD have been defined and analyzed that make the problem more
realistic---for instance, allowing species-specific conservation costs as integers~\cite{hartmann,pardi07,GNAP}, or
selecting reservoirs wherein all species survive~\cite{moulton,bordewich2012budgeted}.
\fi

One important extension is the problem \PDDlong~(\PDD), introduced in~\cite{moulton}, where a food web encodes pre\-da\-tor-prey re\-la\-tion\-ships.
Here, the goal is to select~$k$ species that maximize phylogenetic diversity, with the constraint that each selected species must either be a food source of the ecological system or have at least one prey among the selected species.
Food webs are key ecological models that describe species’ roles in their environments and the flow of energy through ecosystems~\cite{pimm1982food}.
Introducing weights to these interactions—reflecting their ecological importance—gives further insight into the function of the system  and has become increasingly common for food webs~\cite{lieberman,girardin2024analysis,yang2024analysis}.
In fact, it has been noted that ``weighting ecological interactions is especially important in case of food webs''~\cite{scotti}.
However, \PDD assumes unweighted food webs, limiting its capacity to represent interaction significance.

\paragraph*{Our Contribution.}
\looseness=-1
We close this gap by introducing \wPDD, a generalization of \PDD in which the food web is edge-weighted.
We are tasked to select~$k$ species that maximize phylogenetic diversity under the constraint that each selected species is either a source or receives a total incoming weight of at least~1 from other selected species.
We prove that \wPDD is \NP-hard to solve, even on elementary instances, such as if the food web is a clique or a star.

To address this computational hardness, we pursue two directions.
First, we define and study the \rwPDDlong (\rwPDD) problem, where species require that a predefined fraction of their prey also be preserved.
This problem is a special case of \wPDD and generalizes the following.
\begin{itemize}
	\item \PDD: A selected species must have at least one preserved prey;
	\item \hPDD: At least half of the prey of a selected species must be preserved;
	\item \fPDD: All prey of selected species must be preserved.
\end{itemize}

Second, we perform a detailed analysis within the framework of parameterized complexity%
\ifJournal
.
In this field,
we ask whether instances~\Instance of a problem~$\Pi$, in which a problem-specific parameter~$p$ has value~$\kappa$, can be solved in~$f(\kappa)\cdot |\Instance|^{\Oh(1)}$ time (\FPT) or~$|\Instance|^{f(\kappa)}$ time (\XP), where~$f$ is a computable function and~$|\Instance|$ the size of the instance.
\else
, where
we ask whether instances~\Instance of a problem~$\Pi$
can be solved in~$f(\kappa)\cdot |\Instance|^{\Oh(1)}$ time (\FPT) or~$|\Instance|^{f(\kappa)}$ time (\XP), where~$f$ is a computable function, $\kappa$ is problem-specific parameter, and~$|\Instance|$ the size of the instance.
\fi
\Wh{1}-hardness with respect to~$p$ provides evidence that no \FPT-algorithm exists.

We examine \rwPDD and \fPDD with respect to parameters categorizing the structure of the food web.
We focus on the vertex cover number of instances of \rwPDD, where we provide an \XP-algorithm in the general case and, for the case that the phylogenetic tree is replaced with a vertex-weighting, called \rwsPDD, an \FPT-algorithm.
\ifJournal
We further present algorithms for \rwsPDD and \fsPDD that are \XP or \FPT with respect to the cluster vertex deletion number or the treewidth of the food web.
\fi
A comprehensive overview of the complexity results for \rwPDD and \rwsPDD with respect to the main structural parameters is provided in~\Cref{fig:1/2-results} and for \fPDD and \fsPDD in~\Cref{fig:1-results}.
\ifJournal

We observe some hardness results for \fPDD and \hPDD---which then also hold for \rwPDD---and show algorithms for~\rwPDD---which then also hold for the special cases.
\else
\fi

\paragraph*{Structure of the Paper.}
In the next section, we
\ifJournal
give definitions used throughout this paper and
\fi
prove the \NP-hardness of \wPDD and first observations.
In Sections~\ref{sec:structural-h} and~\ref{sec:structural-f}, we, respectively, analyze \rwPDD and \fPDD with respect to
\ifJournal
parameters that categorize the structure
\else
structural parameters
\fi
of the food web.
Finally, in \Cref{sec:discussion}, we discuss our results and present future research ideas.

\section{Preliminaries}
\label{sec:prelims}
\subsection{Definitions}
For a positive integer $a\in \mathbb{N}$,
by~$[a]$ we denote the set $\{1,2,\dots,a\}$, and
by~$[a]_0$ the set $\{0\}\cup [a]$.
For functions $f,f': A \to \mathbb{R}$, we define~$f(A') := \sum_{a \in A'} f(a)$ for subsets~$A'$ of~$A$, and we write~$f' \le f$ if~$f'(a) \le f(a)$ for all~$a \in A$.
For a condition~$\Phi$, the \emph{Kronecker delta}~$\delta_{\Phi}$
\ifConference
is~1
\else
takes the value~1
\fi
if~$\Phi$ holds and otherwise~\mbox{$\delta_{\Phi}$}
\ifConference
is~0
\else
takes the value~0
\fi.

\ifJournal
We write that some table entries store~$-\infty$.
In practice, this could be a large negative integer, for example~$-\PD(X)-1$.
\fi

We consider, unless stated otherwise, simple directed graphs~$G=(V,E)$ with vertex-set~$V(G):= V$ and edge-set~$E(G):= E$.
The \emph{underlying undirected graph of~$G$} is obtained by omitting edge directions.
\ifJournal
If the underlying undirected graph of~$G$ has a certain graph property~$\Pi$ of undirected graphs, we say that~$G$ has property~$\Pi$.
\fi
We write~$uv$ for directed edges from~$u$ to~$v$ and $\{u,v\}$ for an undirected edge between~$u$ and~$v$.
\ifJournal
The \emph{degree}~$\deg(v)$ of a vertex~$v$ is the number of edges incident with~$v$.
\fi
The \emph{in-degree}~$\deg^-(v)$ of a vertex~$v$ is the number of incoming edges at~$v$.
\ifJournal
The \emph{out-degree}~$\deg^+(v)$ is the number of outgoing edges of~$v$.
\fi
For a graph~$G$ and a vertex set~$V'\subseteq V(G)$, the \emph{subgraph of~$G$} induced by~$V'$ is denoted with~$G[V'] := (V', \{uv\in E(G) \mid u,v\in V'\})$.
\ifJournal
With~$G - V' := G[V\setminus V']$ we denote the graph obtained from~$G$ by removing~$V'$ and its incident edges.
\else
We define~$G - V' := G[V\setminus V']$.
\fi
A~\emph{star} with center~$v$ is a connected graph in which every edge is incident with~$v$.

\paragraph*{Phylogenetic Trees and Phylogenetic Diversity.}
A \emph{tree}~$T = (V,E)$ is a directed, connected, cycle-free graph, where the \emph{root}, often denoted with~$\rho$, is the only vertex with an in-degree of~zero, and each other vertex has an in-degree of~one.
Vertices that have an out-degree of~zero are called \emph{leaves}.

\looseness=-1
For a given set $X$, a \emph{phylogenetic~$X$-tree~$\Tree=(V,E,\w)$} is a tree~$T=(V,E)$ in which each non-leaf vertex has an out-degree of at least two, with an \emph{edge-weight} function~\mbox{$\w: E\to \mathbb{N}_{>0}$},
\ifJournal
and an implicit bijective labeling of the leaves with elements of~$X$.
Because of the bijective labeling, we interchangeably write leaf, taxon, and species.
\else
and~$X$ is the set of leaves.
\fi
In biological applications, $X$~is a set of \emph{taxa} (or species), all other vertices of~$\Tree$ correspond to biological ancestors of these taxa and edge weight~$\w(uv)$ describes the \emph{phylogenetic distance} between~$u$~and~$v$.
As~$u$ and~$v$ correspond to distinct, possibly extinct taxa, we assume this distance to be positive.
For an edge~$uv \in E$ in a tree, $v$ is a \emph{child} of~$u$.

\looseness=-1
Given a phylogenetic tree~\Tree and set $A \subseteq X$,
let $E_{\Tree}(A)$ denote the set of edges on a path to a leaf in~$A$.
The \emph{phylogenetic diversity}~$\PD(A)$ of $A$ is
\ifJournal
defined by 
\fi
\begin{equation}
	\label{eqn:PDdef}
	\PD(A) := \sum_{e \in E_{\Tree}(A)} \w(e).
\end{equation}
Informally, the phylogenetic diversity of a set $A$ is the total weight of edges
\ifJournal
on paths
\fi
to~$A$.

A degree-2 vertex~$v$ with incident edges~$uv$ and~$vw$ is~\emph{contracted} if an edge~$uw$ with weight~$\w(uv)+\w(vw)$ is added and~$v$ is removed.
A vertex~$v$~\emph{is identified with the root~$\rho$} if all children of~$v$ become children of the root~$\rho$ and~$v$ is removed.
Let~$A,B \subseteq X$ be taxa sets and
let~$E_{\Tree}^+(B)$ denote the set of edges~$uv$ for which~$B = \off(v)$.
(See \Cref{fig:contraction}.)
The~\emph{$(A,B)$-contraction} of a phylogenetic tree~\Tree results from applying
\ifJournal
these steps exhaustive after each other.
\else
exhaustively:
\fi
\begin{inparaenum}[1)]
	\item Remove all edges in~$E_{\Tree}(A)$ and in~$E_{\Tree}^+(B)$.
	\item Identify all vertices that became in-degree~zero vertices after Step~1 with the root.
	\item Contract all vertices with an in- and out-degree of 1.
\end{inparaenum}

We always consider $(A,B)$-contractions in the context of subtracting~$\PD(A)$ from the threshold of diversity.
Therefore,
intuitively, the $(A,B)$-contraction of a tree is the tree resulting from saving taxa in~$A$ and letting taxa in~$B$ die out.

\begin{figure}[t]
	\tikzstyle{para}=[rectangle,draw=black,minimum height=.8cm,fill=gray!10,rounded corners=1mm, on grid]
	\centering
	\newcommand{\angledlabel}[3][]{\hspace{#2}\parbox{10ex}{{\bf #1}\newline\rotatebox{-15}{\textit{#3}}}}
	\begin{tikzpicture}[scale=.8, every node/.style={scale=.8}]
		\tikzstyle{sol}=[line width=1pt]
		\tikzstyle{nosol}=[dashed]

		\foreach \i/\red/\green in {0/red/blue,6/white/white}{
			\node[smallvertex] (root) at (\i,0) {};
			
			\nextnode{l}{root}{-150:1.5}{revarc, sol, \green}
			\nextnode{ll}{l}{-135:1}{revarc, sol}
			\nextnode[leaf, label=below:{\angledlabel[$x_1$]{8ex}{~}}]{lll}{ll}{-135:1}{revarc, sol, \red}
			\nextnode[leaf, label=below:{\angledlabel[$x_2$]{8ex}{~}}]{llr}{ll}{-45:1}{revarc, sol}
			\nextnode[leaf, label=below:{\angledlabel[$x_3$]{8ex}{~}}]{lr}{l}{-45:1}{revarc, sol, \green}
			
			\nextnode{r}{root}{-30:1.5}{revarc, sol, \green}
			\nextnode{rl}{r}{-150:1}{revarc, sol, \red}
			\nextnode{rr}{r}{-30:1}{revarc, sol, \green}
			\nextnode[leaf, label=below:{\angledlabel[$x_4$]{8ex}{~}}]{rll}{rl}{-120:1}{revarc, sol, \red}
			\nextnode[leaf, label=below:{\angledlabel[$x_5$]{8ex}{~}}]{rlr}{rl}{-60:1}{revarc, sol, \red}
			\nextnode[leaf, label=below:{\angledlabel[$x_6$]{8ex}{~}}]{rrl}{rr}{-120:1}{revarc, sol}
			\nextnode[leaf, label=below:{\angledlabel[$x_7$]{8ex}{~}}]{rrr}{rr}{-60:1}{revarc, sol, \green}
		}
		\node[textnode] at (-2.5,0) {(0)};
		\node[textnode] at (3.5,0) {(1)};
		
		\node[smallvertex] (root) at (10,0) {};
		\nextnode{l}{root}{-135:1}{revarc, sol}
		\nextnode[leaf, label=below:{\angledlabel[$x_2$]{8ex}{~}}]{lr}{l}{-45:1}{revarc, sol}
		\nextnode[leaf, label=below:{\angledlabel[$x_6$]{8ex}{~}}]{r}{root}{-45:1}{revarc, sol}
		\node[textnode] at (9.2,0) {(2)};
		
		\node[smallvertex] (root) at (10,-2) {};
		\nextnode[leaf, label=below:{\angledlabel[$x_2$]{8ex}{~}}]{l}{root}{-135:1}{revarc, sol}
		\nextnode[leaf, label=below:{\angledlabel[$x_6$]{8ex}{~}}]{r}{root}{-45:1}{revarc, sol}
		\node[textnode] at (9.2,-2) {(3)};
	\end{tikzpicture}
	\caption{
		(0): A hypothetical phylogenetic tree~\Tree.
		For~$A=\{x_3,x_7\}$ and~$B=\{x_1,x_4,x_5\}$, blue edges are in~$E_\Tree(A)$ and red edges are in~$E_\Tree^+(B)$.
		For~$i \in [3]$, ($i$) shows the~$(A,B)$-contraction of~\Tree after Step~$i$.
		To increase readability, edge weights are omitted.
	}
	\label{fig:contraction}
\end{figure}

\paragraph*{Food-Webs.}
For a set $X$ of taxa, a \emph{food web~$\Food=(X,E)$ on $X$} is a directed, acyclic graph with an edge-weight function~$\gamma: E \to (0,1]$.
For each edge $xy$, we say $x$ is \emph{prey} of $y$ and $y$ is a \emph{predator} of $x$.
The set of prey and predators of~$x$ are $\prey{x}$ and $\predators{x}$, respectively.
A taxon $x$ without prey is a \emph{source}.

For a food web~\Food, a set~$A \subseteq X$ of taxa
is~\emph{\gviable} if~$\sum_{e \in A_v} \gamma(e) \ge 1$ for each non-source~$v\in A$, where~$A_v$ is the set of edges~$uv \in E(\Food)$ with~$u\in A$.
In other words, each~$v\in A$ is either a source of~\Food, or the total weight of edges incoming from another vertex in~$A$ is at least~1.
If for each taxon all incoming edges have the same weight, then we say that~$\gamma$ is \emph{restricted}.
We observe that if $\gamma$ is restricted and~$A$ is~\gviable, then for any non-source~$v\in A$, at least~$\gamma_v := \lceil\gamma(uv)^{-1}\rceil$ prey of~$v$ are in~$A$, where~$uv$ is an arbitrary incoming edge of~$v$.
%
%

\paragraph*{Problem Definitions and Parameterizations.}
We define the following problem.

\problemdef{\wPDD}{
		A phylogenetic~$X$-tree~$\Tree$, a food web~$\Food$ on $X$ with edge-weights~$\gamma$, and integers~$k$ and~$D$.
	}{
		Is there a~\gviable set~$S\subseteq X$ of size at most~$k$ such that~$\PD(S)\ge D$?
	}

The set~$S$ is called a \emph{solution} of the instance.
We adopt the convention that $n$ is the number of taxa,~$|X|$,
and~$m$ is the number of edges of the food web, $|E(\Food)|$.
Observe that \Tree has $\Oh(n)$~edges.
In \textsc{\rwPDDlong (\rwPDD)},~$\gamma$ has to be restricted.
The problems \fPDD, \hPDD, and~\PDD are special cases of~\rwPDD where, respectively, $\gamma(e)$ is $1/\deg^-(v)$, $2/\deg^-(v)$, and~$1$ for each edge~$e$ incoming at~$v\in X$.
Thus, a taxon can be saved only if all, half, or at least one of its prey are also preserved.

In the respective special cases~\rwsPDD, \fsPDD, \hsPDD, and \sPDD, we require~\Tree to be a star.
It is noted that such an instance can be viewed as only containing a vertex-weighted food web and no phylogenetic tree~\cite{faller}.


For an instance of \rwPDD, we define~\Wmax
\ifJournal
to be the maximum number~$\gamma_x$ for~$x\in X$.
Informally, $\Wmax$ is
\else
as
\fi
the maximum number of prey of a taxon~$x$ that have to be saved so that~$x$ can be saved.
\ifJournal
We may assume that~$\Wmax \le k$ and~$\Wmax$ is at most the maximum in-degree in the food web.
\fi

%

\subsection{Related work}
\PDD has been defined by Moulton et al.~\cite{moulton}.
The conjecture that \PDD is \NP-hard~\cite{spillner} has been proven in~\cite{faller} even for the case that the food web is a directed tree---a spider graph to be more precise.
Further, \sPDD is \NP-hard even if the food web is bipartite~\cite{faller} but can be solved in polynomial time if the food web is a directed tree~\cite{faller}.
\PDD can be approximated with a constant factor if the longest path in the food web has a constant length~\cite{Dvorak}.

\looseness=-1
\ifJournal
\PDD has been studied within the framework of parameterized complexity~\cite{PDD}, and it
\else
It
\fi
has been shown that \PDD is \FPT when parameterized by the budget~$k$ plus the height of the phylogenetic tree~\cite{PDD}.
\ifJournal

Shortly after this paper was written, it was shown that 
\else
Further,
\fi
\fPDD and \hPDD are \Wh{1}-hard and in \XP, when parameterized by the budget~$k$ or the threshold of diversity~$D$~\cite{fPDDKernel}.
\hPDD is \Wh{1}-hard with respect to the treewidth of the food web, but \FPT when parameterized with the food web's node scanwidth~\cite{twvssw}.
None of the three problems, \fPDD, \hPDD, and \PDD, admits a polynomial kernel with respect to~$\vc+D$, where $\vc$ is the vertex cover of the food web~\cite{fPDDKernel}.

\subsection{Preliminary Observations}
\ifJournal
We start with some observations that we use throughout the paper.
\fi

\begin{lemma}
	\label{lem:solution-check}
	Given an instance $\Instance = (\Tree,\Food,k,D)$ of \wPDD and a set~$A\subseteq X$, one can check whether~$A$ is a solution of \Instance
	in $\Oh(n+m)$~time.
\end{lemma}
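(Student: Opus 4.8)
The plan is to verify, one after another, the three conditions that make $A$ a solution of $\Instance$: namely $|A|\le k$, that $A$ is \gviable, and that $\PD(A)\ge D$. I would first build a characteristic (boolean) array of $A$ indexed by the taxa in $\Oh(n)$ time; this lets me test membership $x\in A$ in constant time and, as a by-product, read off $|A|$ and compare it with $k$.

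For the \gviability check I would initialize an accumulator $s(v):=0$ for every $v\in A$ and make a single pass over the $m$ edges of $\Food$: whenever an edge $uv$ satisfies $u\in A$ and $v\in A$, I add $\gamma(uv)$ to $s(v)$. Whether a vertex $v$ is a source is decided by its in-degree in $\Food$, which I precompute with one additional edge scan (a source is exactly a vertex of in-degree $0$). After the pass, $A$ is \gviable iff $s(v)\ge 1$ for every non-source $v\in A$, since $s(v)$ is precisely $\sum_{e\in A_v}\gamma(e)$. Each edge and each vertex is handled a constant number of times, so this part runs in $\Oh(n+m)$.

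The step that needs the most care is computing $\PD(A)=\sum_{e\in E_{\Tree}(A)}\w(e)$. The naive idea of walking from each leaf $x\in A$ up to the root and summing edge weights can revisit high edges repeatedly, costing $\Omega(n^2)$ on a deep tree, and it would also double-count edges shared between root-to-leaf paths. I avoid both problems by a marking traversal: starting from each as-yet-unprocessed leaf of $A$, I follow parent pointers toward the root, and for every newly visited vertex I mark it and add the weight of its incoming edge to a running total, stopping as soon as I reach an already-marked vertex or the root. An edge $uv$ lies in $E_{\Tree}(A)$ exactly when the subtree below $v$ contains a leaf of $A$, which is precisely the set of incoming edges of the vertices this procedure marks; hence the accumulated total equals $\PD(A)$. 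Because each vertex is marked at most once and $\Tree$ has $\Oh(n)$ edges, this runs in $\Oh(n)$ time. Comparing the total with $D$ completes the test, and summing the three parts yields the claimed $\Oh(n+m)$ running time. The main obstacle is thus purely the linear-time, non-double-counting evaluation of $\PD(A)$, which the stopping-at-marked-vertices trick resolves.
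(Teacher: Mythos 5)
Your proposal is correct and follows essentially the same route as the paper's proof: check the size, check \gviable{}ness by a single pass over the edges of~$\Food$ accumulating the incoming weight from~$A$ at each vertex, and compute~$\PD(A)$ by identifying the edges of~$E_{\Tree}(A)$ in linear time. The paper states the~$\Oh(n)$ computation of~$\PD(A)$ without detail, whereas you make explicit the marking traversal that avoids double-counting shared path prefixes; this is a welcome elaboration rather than a different argument.
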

\begin{proof}
	We can compute whether~$\PD(A) \ge D$ in~$\Oh(n)$ time, by summing the weight of edges in~$E_\Tree(A)$.
	One can check~$|A| \le k$ in~$\Oh(k)$ time.
	To check whether~$A$ is \gviable, we need to iterate over the set of prey for each taxon and check the weight of edges coming from~$A$, which takes~$\Oh(m)$ time.
\end{proof}

\begin{lemma}
	\label{lem:solution-size}
	Let $\Instance = (\Tree,\Food,k,D)$ be a \yes-instance of \wPDD.
	A solution of size of exactly~$k$ exists, subject to~$k\le |X|$.
\end{lemma}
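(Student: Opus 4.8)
The plan is to start from any given solution $S$ with $|S| \le k$ and, as long as $|S| < k$, repeatedly add a single taxon to $S$ while preserving both the \gviable property of $S$ and the inequality $\PD(S) \ge D$. After at most $k-|S|$ such steps we reach a solution of size exactly~$k$; this is feasible precisely because the hypothesis $k \le |X|$ guarantees that $X \setminus S$ is nonempty at every step where $|S| < k$.

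Two of the three requirements are maintained essentially for free, and I would dispatch them first. For the diversity constraint I would invoke monotonicity of~$\PD$: if $S \subseteq S'$ then every edge on a path to a leaf in~$S$ is also on a path to a leaf in~$S'$, so $E_\Tree(S) \subseteq E_\Tree(S')$ and hence $\PD(S) \le \PD(S')$ by~\eqref{eqn:PDdef}; enlarging a set never decreases its phylogenetic diversity, so $\PD \ge D$ survives automatically. For the viability of the taxa already in~$S$, note that adding a new taxon~$x$ can only enlarge the set $A_v$ of selected incoming edges of each $v \in S$ (formally $(S\cup\{x\})_v \supseteq S_v$), so the incoming weight $\sum_{e \in A_v}\gamma(e)$ of every previously selected taxon cannot drop below~$1$. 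Thus no member of~$S$ loses its viability when we add a taxon.

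The delicate point, and the step I expect to be the main obstacle, is ensuring that the newly added taxon~$x$ is itself viable in $S \cup \{x\}$; here~$x$ must be chosen carefully rather than arbitrarily. Since $\Food[X\setminus S]$ is a nonempty acyclic graph it has a source, i.e.\ a taxon $x \in X\setminus S$ all of whose $\Food$-prey already lie in~$S$ (equivalently, the first vertex of $X\setminus S$ in a topological order of~$\Food$). For such an~$x$ all of its incoming weight is contributed by~$S$, making this the most favorable possible addition: if $x$ is a source of~$\Food$ it is viable by definition, and if all prey of~$x$ are preserved then~$x$ meets its threshold in every regime where saving the full prey set suffices — in particular in the restricted and unweighted special cases \PDD, \hPDD, \fPDD, and \rwPDD, where $\gamma_x \le \deg^-(x)$.

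I would flag that full rigor for the general edge-weighted \wPDD hinges on this existence step: a taxon whose total incoming weight is strictly below~$1$ can never become viable, so the ``add a topologically minimal remaining taxon'' move must be confined to \emph{addable} taxa (sources, or taxa whose preserved prey already supply incoming weight at least~$1$). I would therefore add a source whenever one is available and otherwise argue that, in the instances for which the lemma is applied, an addable taxon always exists. Establishing this guarantee is the real content of the argument, whereas the $\PD$-monotonicity and the non-interference of an added taxon with the existing members are routine.
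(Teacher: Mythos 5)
Your proposal follows the paper's proof essentially step for step: the paper also pads $S$ one taxon at a time with an $x \in X\setminus S$ that is a source or has all its prey in~$S$ (existence via a topological order of the acyclic food web), and it likewise relies on the monotonicity of~$\PD$ and on the fact that adding a taxon cannot destroy the viability of taxa already in~$S$. On those routine parts you and the paper agree completely.

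The step you single out as the real obstacle is, however, precisely the step the paper glosses over. The paper simply asserts ``Because $S$ is \gviable, also $S\cup\{x\}$ is \gviable,'' but for general \wPDD this does not follow: a non-source $x$ all of whose prey lie in~$S$ receives from~$S$ exactly its \emph{total} incoming weight $\sum_{ux\in E(\Food)}\gamma(ux)$, and nothing in the definition forces this sum to be at least~$1$. Concretely, take $X=\{a,b\}$ with a single food-web edge $ab$ of weight $\gamma(ab)=1/2$, let $k=2$ and $D=\w(\rho a)$: this is a \yes-instance via $\{a\}$ and $k\le|X|$, yet $b$ can never belong to a \gviable set, so no solution of size exactly~$k$ exists. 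The lemma therefore needs an additional hypothesis --- e.g.\ that every non-source taxon has total incoming weight at least~$1$, which does hold in \PDD, \hPDD, \fPDD and in the instances built in the paper's reductions. Your instinct to confine the padding step to ``addable'' taxa is the right one, but you are also right that you cannot prove such a taxon always exists in the general weighted setting; neither can the paper. In short, the gap you flag is genuine, and it is a gap in the source's own argument rather than something your reconstruction is missing.
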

\begin{proof}
	Let~$S$ be a solution for \Instance with~$|S| < k$.
	Assume~$S \ne X$ and let~$x$ be a taxon in~$X\setminus S$ being a source or having all prey in~$S$.
	Such a taxon exists as~$\Food$ is a directed acyclic graph and has a topological order.
	Because~$S$ is \gviable, also~$S\cup \{x\}$ is~\gviable.
	Observe $\PD(S\cup \{x\}) \ge \PD(S)$ for each taxon~$x \in X$.
	So~$S\cup\{x\}$ is a solution and consequently, there is a solution of size~$k$. 
\end{proof}

\newcommand{\lemSubSolutions}[1]{
	\begin{lemma}[$\star$]
		#1
		Given a food web~\Food and sets of taxa~$R$ and~$Q$ such that no taxon of~$X\setminus R$ can reach a taxon of~$R$ and no taxon of~$Q$ can reach a taxon of~$X\setminus Q$.
		If~$S$ is \fviable in~$\Food - (R\cup Q)$, then~$S \cup R$ is \fviable in~\Food. 
	\end{lemma}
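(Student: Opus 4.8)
The plan is to unpack the two reachability hypotheses into closure properties and then verify viability of $S\cup R$ one vertex at a time. Since no taxon of $X\setminus R$ can reach a taxon of $R$, every in-neighbor (prey) of a vertex of $R$ must again lie in $R$; iterating along directed paths, $R$ is closed under taking prey. Dually, since no taxon of $Q$ can reach a taxon of $X\setminus Q$, every out-neighbor (predator) of a vertex of $Q$ lies in $Q$; equivalently, a taxon outside $Q$ has \emph{no} prey inside $Q$. These two reformulations are the only way in which the hypotheses enter the argument, so I would record them first as a short observation. I would also note the routine fact that deleting vertices never creates prey, so sources of $\Food$ remain sources of $\Food-(R\cup Q)$, and that $S\subseteq X\setminus(R\cup Q)$, whence $S$ and $R$ are disjoint and $S\cup R$ is genuinely partitioned by the cases below.

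Next I would fix an arbitrary non-source $v\in S\cup R$ of $\Food$ and prove that every prey of $v$ in $\Food$ belongs to $S\cup R$; by the combinatorial description of \fviable (each non-source retains \emph{all} of its current prey) this is exactly the property that must be checked. If $v\in R$, the closure of $R$ under prey immediately gives $\prey{v}\subseteq R\subseteq S\cup R$, and this case uses nothing about $S$.

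The substantive case is $v\in S$, and hence $v\in X\setminus(R\cup Q)$. Here I would partition $\prey{v}$ (taken in $\Food$) into the parts lying in $R$, in $Q$, and in $X\setminus(R\cup Q)$. The part in $R$ is harmless. The part in $Q$ is empty: a prey $u\in Q$ of $v$ would let $u$ reach $v\notin Q$, contradicting the closure property of $Q$. Finally, the part in $X\setminus(R\cup Q)$ is precisely the set of prey of $v$ inside the reduced food web $\Food-(R\cup Q)$; if it is non-empty then $v$ is a non-source there, so the \fviable property of $S$ forces all of it into $S$, and if it is empty there is nothing to show. Combining the three parts yields $\prey{v}\subseteq S\cup R$, which completes the verification.

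I expect the only genuine obstacle to be the mismatch of ``source'' status between $\Food$ and $\Food-(R\cup Q)$: a taxon $v\in S$ may be a non-source of $\Food$ yet a source of the reduced food web once its prey have been deleted, in which case the viability of $S$ provides no direct information about $v$. The case analysis above dissolves this difficulty uniformly---such a $v$ has all of its prey in $R\cup Q$, none of them in $Q$ by the closure property, and therefore all of them in $R$---so no separate subcase is actually required, and the remaining steps are bookkeeping.
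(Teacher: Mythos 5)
Your proposal is correct and follows essentially the same route as the paper's own proof: both translate the two reachability hypotheses into the closure properties $\prey{x}\subseteq R$ for $x\in R$ and $\prey{x}\subseteq X\setminus Q$ for $x\in X\setminus Q$, and then conclude $\prey{x}\subseteq S\cup R$ for $x\in S$ from the viability of $S$ in the reduced food web. Your write-up is simply a more explicit, case-by-case version of the paper's three-line argument, with the source-status subtlety spelled out.
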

}
\lemSubSolutions{\label{lem:sub-solutions}}
\ifConference
Proofs of theorems marked with~$\star$ are partly or fully deferred to the appendix.
\fi

\newcommand{\ProofLemSubSolutions}[1]{
	\begin{proof}
		Because no taxon of~$X\setminus R$ can reach a taxon of~$R$, we conclude~$\prey{x} \subseteq R$ for each~$x\in R$.
		Analogously, $\prey{x} \subseteq X \setminus Q$ for each~$x \in X\setminus Q$.
		
		Assume that~$S$ is \fviable in~$\Food - (R\cup Q)$.
		Because~$S \subseteq X \setminus (R\cup Q)$,
		we conclude~$\prey{x} \subseteq S \cup R$ for each~$x\in S$.
		This proves the lemma.
	\end{proof}
}

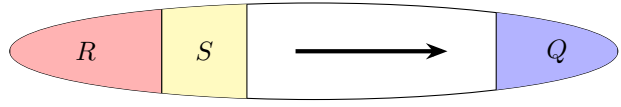
\begin{figure}[t]
	\centering
	\begin{tikzpicture}[scale=.8]
		\clip [draw] (0,0) circle [x radius=5cm, y radius=8mm];;
		
		\draw [fill=red!30!] (-5,.8) rectangle (-2.5,-.8);
		\draw [fill=yellow!30!] (-1.1,.8) rectangle (-2.5,-.8);
		\draw [fill=blue!30!] (5,.8) rectangle (3,-.8);
		
		\draw [ultra thick,arrows={-Stealth[length=8pt]}] (-.3,0) -- (2.2,0);
		
		\node at (-3.75,0) {$R$};
		\node at (-1.8,0) {$S$};
		\node at (4,0) {$Q$};
	\end{tikzpicture}
	\caption{
		An illustration of the \Cref{lem:sub-solutions}.
		Here, all edges are directed towards the right.
	}
	\label{fig:example_RSQ}
\end{figure}

\subsection{Hardness of Weighted PDD}
\looseness=-1
Now, we prove that solving \wPDD is \NP-hard, even on instances that can be considered as containing only elementary information.

\newcommand{\thmWeighted}[1]{
	\begin{theorem}[$\star$]
		#1
		\wPDD is weakly \NP-hard in general and~\Wh{1}-hard when parameterized by the solution size~$k$, even if
		\begin{itemize}
			\item the phylogenetic tree is a star and the food web is a star, or
			\item the phylogenetic tree is a star and the food web is a clique.
		\end{itemize}
		These cases become strongly \NP-hard, if rationals are allowed as edge weights in the phylogenetic tree.
	\end{theorem}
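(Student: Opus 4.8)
The plan is to reduce from \sSubSum (sized Subset Sum): given positive integers $a_1,\dots,a_n$, a target $t$, and a size $\kappa$, decide whether some $\kappa$-element index set $I$ satisfies $\sum_{i\in I}a_i=t$; we may assume each $a_i\le t$. Since the phylogenetic tree is always a star, $\PD$ degenerates to a vertex weighting, $\PD(S)=\sum_{x\in S}\w(x)$, where $\w(x)$ is the weight of the single root edge above the leaf $x$. For the \textbf{star food web} I would introduce one source taxon $x_i$ per integer $a_i$ and a single predator $c$, with an edge $x_i c$ of food-web weight $\gamma(x_i c)=a_i/t\in(0,1]$; the underlying undirected graph is then a star centred at $c$. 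Set $M:=t+1$, tree weights $\w(x_i):=M-a_i>0$ and $\w(c):=W$ for a large integer $W$, budget $k:=\kappa+1$, and threshold $D:=W+\kappa M-t$.

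The three design goals are: (i) because $W$ dominates every contribution of the $x_i$, reaching $D$ forces $c\in S$; (ii) as $c$ is the only non-source, being \gviable is exactly the single inequality $\sum_{x_i\in S}\gamma(x_i c)\ge1$, i.e.\ $\sum_{i\in I}a_i\ge t$ where $I$ indexes the chosen prey; and (iii) the complementary weights $\w(x_i)=M-a_i$ together with the budget force the reverse inequality. Writing $j:=|I|\le\kappa$ (budget, since $c$ occupies one slot), one has $\PD(S)-W=jM-\sum_{i\in I}a_i$, and requiring this to be at least $\kappa M-t$ forces $j=\kappa$ and $\sum_{i\in I}a_i\le t$ (here $M>t$ rules out $j<\kappa$). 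Combining (ii) and (iii) gives $\sum_{i\in I}a_i=t$ with $|I|=\kappa$, exactly a solution of the \sSubSum instance. As \sSubSum is weakly \NP-hard this yields weak \NP-hardness, and as the budget $k=\kappa+1$ is a function of the source parameter $\kappa$, with respect to which \sSubSum is \Wh{1}-hard, the reduction is a parameterized reduction and gives \Wh{1}-hardness in $k$.

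For the \textbf{clique food web} the only complication is that completing the graph to a transitive tournament turns the prey into non-sources. I would prepend one extra source $s$ in the topological order, give every edge $s x_i$ weight $1$ so that selecting $s$ single-handedly makes each $x_i$ \gviable, and give all remaining structural edges (the $x_ix_j$ with $i<j$, and $sc$) a tiny weight $\varepsilon$ with $n\varepsilon<1$ and $t\varepsilon<1$, while the edges $x_i c$ keep weight $a_i/t$. Then no $x_j$ is \gviable unless $s\in S$, the viability condition at $c$ is still equivalent to $\sum_{i\in I}a_i\ge t$ (using $t\varepsilon<1$ to absorb the stray $\varepsilon$ from $sc$), and forcing $s,c\in S$ by large tree weights with budget $\kappa+2$ reproduces the analysis above; the underlying undirected graph is now complete.

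Finally, for \textbf{strong \NP-hardness with rational tree weights} I would instead reduce from \TPartition. The obstacle is that a star (or clique) food web offers essentially a single food-web inequality, whereas $3$-Partition imposes $m$ independent ``each bin sums to $B$'' constraints. My plan is to pack these $m$ constraints into the single \PD value by assigning the items rational tree weights built from $m$ distinct small prime denominators, so that matching one aggregate \PD threshold simultaneously certifies all $m$ bin-sums via a residue/positional argument, while every numerator and denominator, and hence every number in the instance, stays polynomially bounded; the food-web inequality and the budget then supply the matching two-sided control, exactly as in the sandwich of step (iii). Designing this encoding so that a one-sided \PD threshold faithfully enforces $m$ simultaneous equalities without any numerical blow-up is the delicate point and the main obstacle, and it is precisely the use of rational (rather than integer) tree weights that makes the compression possible; since the resulting numbers are then polynomially bounded, the \NP-hardness obtained is strong.
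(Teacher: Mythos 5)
Your reductions for the two bulleted cases are correct and essentially the paper's own construction in disguise: the paper reduces from \KP with one heavy predator~$\star$ whose viability constraint $\sum_{a\in A'}\frac{M-c(a)}{M(k+1)-B}+\frac{M}{M(k+1)-B}\ge 1$ encodes the budget side and whose complementary tree weights encode the value side; your \sSubSum version (weights $a_i/t$ on the food-web star, $M-a_i$ on the tree, sandwiching $\sum_{i\in I}a_i$ between $t$ from viability and $t$ from the \PD threshold) is the special case cost${}={}$value of the same idea, and your clique completion with a super-source $s$ and $\varepsilon$-weight filler edges is a legitimate variant of the paper's weight-$1$ completion. The weak \NP-hardness and \Wh{1}-hardness claims are therefore established.

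The strong \NP-hardness claim is not. What you offer for it is a plan, not a proof, and you yourself flag its central step as an unresolved obstacle; moreover I do not believe the plan can be carried out as described. A star (or transitive-tournament) food web gives you exactly one aggregate viability inequality at the single sink, and together with the budget and the one-sided \PD threshold you have three linear inequalities in total --- this is enough to pin down one quantity from both sides (as in your sandwich), but not to certify the $m$ independent bin equalities of \TPartition. The ``$m$ small prime denominators'' idea runs into the standard objection: to recover $m$ residues from a single sum $\sum_j n_j/p_j$ you are implicitly working over the common denominator $\prod_j p_j$, which is exponential in value, so either the threshold~$D$ (or the gaps you must separate) ceases to be polynomially bounded --- defeating the purpose of a strong reduction --- or the decoding is not injective; and in any case an \emph{inequality} $\PD(S)\ge D$ cannot enforce a residue-exact match without a two-sided bound per digit, which the instance format does not provide. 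The paper avoids all of this: it observes that \KP with \emph{rational} costs and values is strongly \NP-hard (Wojtczak, 2018) and reuses the identical reduction verbatim --- the rational costs already live in the food-web weights $\gamma(a\star)\in(0,1]$, and allowing rational edge weights in the phylogenetic tree is exactly what is needed to accommodate the rational values $\nu(a)$. You should replace your \TPartition sketch by this observation (or by some other completed argument); as written, the third assertion of the theorem is unproven.
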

}
\thmWeighted{\label{thm:weighted-PDD}}

Note that \PDD is strongly \NP-hard.
However, if the food web is a star or a clique, then solving the problem can be done in polynomial time, because after compulsorily saving the source, all taxa can be selected without further conditions and the instance can be reduced to \MPD and solved with Faith's greedy algorithm~\cite{FAITH1992}.
Consequently, this theorem shows \NP-hardness of cases that are computationally easy for \PDD and even for \rwPDD.

\newcommand{\CorrectnessThmWeighted}{
	\proofpara{Correctness}
	The reduction is computed in polynomial time.
	We only consider the correctness when~\Food is a star and omit the equivalent case of~\Food being a clique.
	
	Let~$A'$ be a solution of~$\Instance$ of size~$k$.
	We show that~$S:= A' \cup \{\star,\overline{a}\}$ is a solution of~$\Instance'$.
	It is~$\PD(S) = 2N + \nu(A') \ge 2N + D = D'$ and the size of~$S$ is clearly~$|A'|+2 = k+2$.
	It remains to show that~$S$ is \gviable.
	Since~$A'\cup\{\overline{a}\}$ are sources, it is sufficient to check that the incoming weight of~$\star$ is at least~1.
	It is
	\begin{eqnarray}
		\gamma(\overline{a} \star) + \sum_{a\in A'} \psi(a \star)
		&=& \frac{M + \sum_{a\in A'} M-c(a)}{M(k+1) - B}\\
		&=& \frac{(k+1)M - \sum_{a\in A'} c(a)}{M(k+1) - B}\\
		&\ge& \frac{(k+1)M-B}{M(k+1) - B} = 1.
	\end{eqnarray}
	Consequently,~$S$ is \gviable and a solution for~$\Instance'$.
	
	Conversely, let~$S$ be a solution for~$\Instance'$.
	For~$N$ big enough, we may assume~\mbox{$\star,\overline{a} \in S$}.
	We define~$A' := S \setminus \{\star,\overline{a}\}$ and show that~$A'$ is a solution for~$\Instance$.
	It is~$\nu(A') = \PD(S) - 2N \ge D$.
	Because~$S$ is \gviable, $\gamma(\overline{a} \star) + \sum_{a\in A'} \psi(a \star) \ge 1$.
	Further, we may assume by~\Cref{lem:solution-size} that~$|S| = k'$.
	Consequently,
	\begin{align}
		&& \frac{M + \sum_{a\in A'} M-c(a)}{M(k+1) - B} \ge& 1\\
		&\iff& M + \sum_{a\in A'} M-c(a) \ge& M(k+1) - B\\
		&\iff& \sum_{a\in A'} c(a) \le& B
	\end{align}
	Thus,~$A'$ is a solution of~\Instance.
}

\newcommand{\ProofThmWeighted}[1]{
	\begin{proof}
		\looseness=-1
		We reduce from \KP, in which a set of items~$A=\{a_1,\dots,a_n\}$,
		a cost-function~$c: A\to \mathbb{N}$, a value-function~$\nu: A\to \mathbb{N}$,
		and two integers~\mbox{$B,D\in \mathbb{N}$} are given.
		It is asked whether a set~$A' \subseteq A$ with~$c(A') \le B$ and~$\nu(A') \ge D$ exists.
		\KP is \NP-hard~\cite{karp} and~\Wh{1}-hard with respect to the solution size~$k$~\cite{downey}.
		Allowing rational costs and values makes \KP strongly \NP-hard~\cite{wojtczak2018strong}.
		
		Observe that after multiplying~$c(a)$ and~$B$ with~$k+1$ for each~$a\in A$ and adding~$k$ items of cost~1 and value~0, we may assume that if there is a solution, then there is also one of size~$k$.

		\proofpara{Reduction}
		Given an instance~$\Instance := (A=\{a_1,\dots,a_n\},c,\nu,B,D)$ of
		\KP, we construct an instance~$\Instance' := (\Tree,\Food,k',D')$ of \wPDD as follows.
		
		Define~$X := A \cup \{\star,\overline{a}\}$ and let~$N$ and~$M$ be big integers.
		Let \Tree be a star with root~$\rho$, leaves~$X$, and edge weights~$\w(\rho a) := \nu(a)$ for each~$a\in A$ and~$\w(\rho \star) := \w(\rho \overline{a}) := N$.
		Let~\Food contain edges~$a \star$ for each~$a\in A\cup \{\overline{a}\}$ of weight~$\gamma(a \star) := (M-c(a)) / (M(k+1) - B)$ and~$\gamma(\overline{a} \star) := M / (M(k+1) - B)$.
		
		As constructed so far,~\Food is a star.
		To obtain a clique, we add edges~$\overline{a} a_i$ and~$a_p a_q$, all of weight~1, for each~$i\in [n]$ and each combination~$1 \le p < q \le n$.

		Finally, we set~$k' := k+2$ and~$D' := 2N + D$.

		\looseness=-1
		\proofpara{Intuition}
		\ifJournal
		By the construction, it is ensured
		\else
		The construction ensures
		\fi
		that~$c(A') \le B$ if and only if~$A'' := A'\cup \{\star,\overline{a}\}$ is \gviable in~\Food and~$\nu(A') \ge D$ if and only if~$\PD(A'') \ge D'$ for any set~$A'\subseteq A$.

		#1
	\end{proof}
}
\ProofThmWeighted{\textit{The detailed correctness of this theorem is deferred to the appendix.}}

\looseness=-1
\section{Structural Parameters of the Food-Web for rw-PDD}
\label{sec:structural-h}
\ifConference
In this section, we consider \rwPDD with respect to parameters categorizing the food web's structure.
\else
In this section, we consider parameters that categorize the structure of the food web of an instance of \rwPDD.
\fi
A comprehensive overview of the complexity results for
\ifJournal
\rwPDD and \rwsPDD with respect to 
\fi
the main structural parameters are provided in~\Cref{fig:1/2-results}.
\ifJournal
We note that all
\else
All
\fi
three described \XP-algorithms are \FPT-algorithms if~$\Wmax$%
\ifJournal
---the maximum number of necessary prey to save for a taxon---%
\else
~
\fi
is added to the parameter.

\definecolor{r}{rgb}{1,0.5,0.5}  
\definecolor{o}{rgb}{1,0.65,0.3} 
\definecolor{a}{rgb}{1,0.85,0.2} 
\definecolor{g}{rgb}{0.7,1,0.7}  
\definecolor{w}{rgb}{1,1,1}      

\definecolor{grey}{rgb}{0.9453, 0.9453, 0.9453}
\definecolor{grey2}{rgb}{0.85, 0.85, 0.85}

\begin{figure}[t]
	\tikzstyle{para}=[rectangle,draw=black,minimum height=.8cm,fill=gray!10,rounded corners=1mm, on grid]
	
	\newcommand{\tworows}[2]{\begin{tabular}{c}{#1}\\{#2}\end{tabular}}
	\newcommand{\threerows}[3]{\begin{tabular}{c}{#1}\\{#2}\\{#3}\end{tabular}}
	\newcommand{\distto}[1]{\tworows{Distance to}{#1}}
	\newcommand{\disttoc}[2]{\threerows{Distance to}{#1}{#2}}
	
	\DeclareRobustCommand{\tikzdot}[1]{\tikz[baseline=-0.6ex]{\node[draw,fill=#1,inner sep=2pt,circle] at (0,0) {};}}
	\DeclareRobustCommand{\tikzdottc}[2]{\tikz[baseline=-0.6ex]{\node[draw,diagonal fill={#1}{#2},inner sep=2pt,circle] at (0,0) {};}}
	
	\tikzset{
		diagonal fill/.style 2 args={fill=#2, path picture={
				\fill[#1, sharp corners] (path picture bounding box.south west) -|
				(path picture bounding box.north east) -- cycle;}},
		reversed diagonal fill/.style 2 args={fill=#2, path picture={
				\fill[#1, sharp corners] (path picture bounding box.north west) |- 
				(path picture bounding box.south east) -- cycle;}}
	}
	\centering
	\begin{tikzpicture}[node distance=2*0.55cm and 3.7*0.45cm, every node/.style={scale=0.66}]
		\linespread{1}
		\node[para, diagonal fill=ga] (vc) {Minimum Vertex Cover};
		\node[para, diagonal fill=ar, xshift=38mm] (ml) [right=of vc] {Max Leaf \#};
		\node[para, xshift=-15mm, diagonal fill=aw] (dc) [left=of vc] {\distto{Clique}};
		
		\node[para, diagonal fill=ar, xshift=11mm] (dcc) [below= of dc] {\distto{Cluster}}
		edge (dc)
		edge[bend left=20] (vc);
		\node[para, diagonal fill=ar, xshift=8mm] (ddp) [below=of vc] {\distto{disjoint Paths}}
		edge (vc)
		edge (ml);
		\node[para, diagonal fill=ar] (fes) [below =of ml] {\tworows{Feedback}{Edge Set}}
		edge (ml);
		\node[para, xshift=2mm, diagonal fill=ar] (bw) [below right=of ml] {Bandwidth}
		edge (ml);
		\node[para, xshift=5mm, yshift=0mm, diagonal fill=ar] (td) [right=of ddp] {Treedepth}
		edge[bend right=20] (vc);
		
		\node[para, diagonal fill=ar] (fvs) [below= of ddp] {\tworows{Feedback}{Vertex Set}}
		edge (ddp)
		edge[bend right=5] (fes);
		\node[para, xshift=8mm, diagonal fill=ar] (pw) [below= of td] {Pathwidth}
		edge (ddp)
		edge (td)
		edge[bend right=8] (bw);
		
		\node[para, xshift=5mm, fill=r] (dbp) [below left= of fvs] {\distto{Bipartite}}
		edge (fvs);
		\node[para, yshift=0mm, diagonal fill=or] (tw) [below=of pw] {Treewidth}
		edge (fvs)
		edge (pw);

		\node[para, yshift=-20mm, xshift=-10mm, diagonal fill={blue!30}{grey2}] [below= of dc] {\tworows{{\rwPDD}\;\;\;\;\;\;\;\;\;\;\;}{\;\;\;\;\;\;\;\;\;\;\rwsPDD}};
	\end{tikzpicture}
	\caption{
		In this figure, the complexity of \rwPDD and \rwsPDD with respect to several structural parameters of the food web is presented.
		The complexity of \rwPDD is in the top left of each box, and the complexity of \rwsPDD is in the bottom right.
		A parameter~$p$ is marked
		in red~(\tikzdot{r}) if \rwPDD~/~\rwsPDD is \NP-hard for constant values of~$p$, or
		in amber~(\tikzdot{a}) or green~(\tikzdot{g}) if \rwsPDD~/~\rwsPDD admits an \XP-, or, respectively, an \FPT-algorithm with respect to~$p$.
		Classifying \rwPDD parameterized by distance to clique remains open.
		\rwsPDD with respect to treewidth is \Wh{1}-hard~\cite{twvssw} and in \XP.
		Two parameters~$p_1$ and~$p_2$ are connected with an edge if in every graph the parameter~$p_1$ further up is bounded by a function in $p_2$.
		A more in-depth look into the hierarchy of graph parameters can be found in~\cite{graphparameters}.
	}
	\label{fig:1/2-results}
\end{figure}

The hardness results are direct implications of results of~\cite{faller} or~\cite{PDD}.
\ifJournal
\PDD (in an undirected variant of the phylogenetic tree)
is \NP-hard even if
the phylogenetic tree has a height of~2 and
the food web is a directed tree~\cite{faller}---spider graphs in fact.
By a remark in~\cite{PDD},
in directed phylogenetic trees,
the \NP-hardness even holds when every connected component in the food web is a directed path of length~3.
Because in a directed path, every vertex has an in-degree of 1, these results thus generalize to \fPDD and \hPDD, as every taxon has at most one prey and then in all three variants of the problem, each non-source requires exactly their only prey to be saved, before it can be saved.

\begin{corollary}
	\label{cor:Faller}
	\fPDD and \hPDD remain \NP-hard on instances in which every connected component in the food web is a directed path of length~3.
	In such instances the maximum vertex degree in the food web is~2.
\end{corollary}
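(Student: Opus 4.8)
The plan is to reduce directly from \PDD by observing that the three problems \PDD, \hPDD, and \fPDD become literally identical on any instance whose food web has maximum in-degree~1, and then to invoke an existing hardness construction for \PDD that already produces such instances. Concretely, I would take as a starting point the \NP-hardness of \PDD on instances in which every connected component of the food web is a directed path of length~3; this follows from the spider-graph reduction of~\cite{faller} together with the remark in~\cite{PDD} that the hardness persists in the directed setting with path components. Since a directed path of length~3 has in-degree at most~1 at every vertex, these are exactly the instances I want.

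The key step is to verify that, on such an instance, a set $S \subseteq X$ satisfies the viability constraint of \PDD if and only if it satisfies that of \hPDD, and likewise for \fPDD, so that the identity map is the reduction. In a directed path every non-source vertex $v$ has a single incoming edge $uv$, i.e.\ $\deg^-(v) = 1$ and $\prey{v} = \{u\}$. Using the paper's observation that a \gviable set must contain $\gamma_v = \lceil\gamma(uv)^{-1}\rceil$ prey of each non-source $v \in S$, I would compute this quantity in all three variants: for \fPDD one has $\gamma(uv) = 1/\deg^-(v) = 1$, for \PDD one has $\gamma(uv) = 1$, and for \hPDD one has $\gamma(uv) = 2/\deg^-(v) = 2$; hence $\gamma_v$ equals $\lceil 1\rceil$, $\lceil 1\rceil$, and $\lceil 1/2\rceil$ respectively, i.e.\ exactly~$1$ in every case. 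Equivalently, ``all,'' ``at least half,'' and ``at least one'' of a single prey all say the same thing: $v$ can be saved precisely when its unique prey $u$ is saved. Since sources carry no constraint, the viability condition is the same function of $S$ across the three problems, so the same instance is a \yes-instance for \PDD, for \hPDD, and for \fPDD simultaneously, transferring the \NP-hardness to the latter two.

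The degree claim is then immediate: in the underlying undirected graph of a directed path on four vertices, the two internal vertices have degree~$2$ and the two endpoints degree~$1$, so the maximum vertex degree in the food web is~$2$. I do not expect a genuine obstacle here, as the difficult combinatorics is outsourced to the cited constructions; the only points needing care are confirming that the cited \PDD-hardness instances can indeed be taken with path components (guaranteeing the in-degree-$1$ property), and checking that it is full solution sets, and not merely the yes/no answer, that are preserved by the identity reduction.
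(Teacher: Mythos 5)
Your proposal is correct and matches the paper's own argument: the paper likewise cites the \NP-hardness of \PDD on food webs whose components are directed paths of length~3 (from~\cite{faller} together with the remark in~\cite{PDD}) and observes that since every vertex then has in-degree at most~1, each non-source in all three variants requires exactly its unique prey to be saved, so the problems coincide on such instances. Your explicit computation of $\gamma_v$ and the degree bound are just a more detailed write-up of the same reasoning.
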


\PDD remains \NP-hard if the food web is an \emph{undirected} path and, therefore, the max-leaf number\footnote{The max-leaf \# of an undirected graph~$G$ is the maximum number of leaves any spanning tree of~$G$~has.} is~2~\cite{PDD}.
Using a similar approach, we show the following.

\newcommand{\corMaxLeaf}[1]{
	\begin{corollary}[$\star$]
		{#1}
		\fPDD and \hPDD are \NP-hard even if the food web is a path, and, therefore, the max-leaf number is~2.
	\end{corollary}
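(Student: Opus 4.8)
The plan is to handle the two problems separately, because on a path food web every vertex has in-degree at most~$2$ and the only place where the viability requirements of \fPDD, \hPDD and \PDD differ is at the in-degree-$2$ vertices. For \hPDD the statement is then almost immediate. A non-source~$v$ is half-viable exactly when at least~$\lceil \deg^-(v)/2\rceil$ of its prey are preserved, and this quantity equals~$1$ both when $\deg^-(v)=1$ and when $\deg^-(v)=2$; hence on a path it coincides with the ``at least one preserved prey'' requirement of \PDD for every non-source. Consequently any \PDD-instance whose food web is an undirected path yields, with the same phylogenetic tree, budget and threshold, an \hPDD-instance with the identical solution set. Since \PDD is \NP-hard on undirected paths~\cite{PDD}, I would invoke this coincidence to transfer the hardness to \hPDD, and remark that a path has max-leaf number~$2$.

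For \fPDD the in-degree-$2$ behaviour becomes an \emph{and} (both prey required), so instead I would start from \Cref{cor:Faller}, which already gives \NP-hardness of \fPDD when the food web is a disjoint union of $t$ directed paths $P_1,\dots,P_t$, and merge these into one path without changing the \fPDD-semantics. Writing each $P_j$ as the directed chain from its source~$a_j$ to its sink~$e_j$, and letting~$c_j$ be the predecessor of~$e_j$ in~$P_j$, I insert between~$e_j$ and~$a_{j+1}$ a fresh \emph{buffer} taxon~$B_j$ with the two food-web edges $B_j e_j$ and $B_j a_{j+1}$. The food web stays acyclic, its underlying undirected graph is the single path $a_1\cdots e_1\,B_1\,a_2\cdots e_2\,B_2\cdots e_t$, and every in-degree is at most~$2$: each~$B_j$ is a source, each former source~$a_{j+1}$ now has the single prey~$B_j$, and each former sink~$e_j$ now has the two prey $\{c_j,B_j\}$.

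To make the merge faithful I would force every buffer into the solution by giving $B_j$ a large edge-weight~$\w(\rho B_j)$ in the phylogenetic tree, raising the budget by~$t$ and the diversity threshold by $\sum_j \w(\rho B_j)$ so that the threshold is unreachable unless all buffers are selected; since each~$B_j$ is a source, selecting it never violates viability. With all buffers present, the new constraint at~$a_{j+1}$ (prey~$B_j$ preserved) is automatically satisfied, so~$a_{j+1}$ is freely selectable exactly as the original source was, and the in-degree-$2$ \emph{and}-constraint at~$e_j$ collapses to ``$c_j$ preserved'', which is precisely the original requirement of~$P_j$. Thus each chain is simulated independently and the merged instance is a \yes-instance of \fPDD iff the original disjoint-path instance is. The main obstacle is the bookkeeping in this equivalence: checking in both directions that the forced buffers neither short-circuit a chain (they only feed the already-free sources~$a_{j+1}$ and the transparent and-vertices~$e_j$) nor destroy a genuine solution, and confirming that the budget and threshold adjustments leave the choice among the remaining path vertices untouched.
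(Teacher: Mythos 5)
Your proposal is correct, but both halves deviate from the paper's construction. The paper handles \fPDD and \hPDD with a single gadget: starting from \Cref{cor:Faller} (disjoint directed paths), it links consecutive components by inserting a fresh taxon $p_i$ that is a \emph{predator} of the sink of one path and of the source of the next (in-degree~2, out-degree~0), gives it negligible weight after scaling the original tree by a big constant $M$, and leaves the budget unchanged. Because these connectors have no outgoing edges, they impose no constraint on any original taxon and need not be selected, so the same reduction is simultaneously correct for \fPDD and \hPDD. You instead split the cases: for \hPDD you observe that on a food web of maximum in-degree~2 the half-viability condition degenerates to ``at least one prey saved'' and so the known \NP-hardness of \PDD on undirected paths~\cite{PDD} transfers verbatim — this is more elementary than the paper's route and needs no construction at all; for \fPDD you insert buffer \emph{sources} feeding both adjacent endpoints and force them into the solution via large weights and an enlarged budget and threshold. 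Your \fPDD gadget is sound (the forced buffer trivializes the new source's single-prey constraint and leaves the former sink's and-constraint equivalent to requiring its original predecessor), and you were right to not reuse it for \hPDD, where a forced prey would short-circuit the or-constraint at the former sinks. What the paper's choice buys is uniformity and the absence of any forcing argument; what yours buys is a one-line proof for \hPDD. The only blemishes are cosmetic: $t$ paths need $t-1$ buffers, not $t$, and ``large'' should be quantified as exceeding the total weight of the original tree.
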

}
\corMaxLeaf{\label{cor:max-leaf-number}}
\newcommand{\ProofCorMaxLeaf}{
	\begin{proof}
		\proofpara{Reduction}
		Let $\Instance = (\Tree, \Food, k, D)$ be an instance of \PDD in which each connected component of \Food is a path of length three.
		Let $P^{(0)},P^{(1)},\dots,P^{(q)}$ be an arbitrary order of the connected components of \Food where $P^{(i)}$ contains the taxa $\{y_{i,0},y_{i,1},y_{i,2}\}$ and edges $y_{i,0}y_{i,1}$ and $y_{i,1}y_{i,2}$.
		Let~$M$ be a big constant.
		
		In the phylogenetic tree, we multiply every weight with~$M$.
		We add taxa~$p_1, \dots, p_{q-1}$ and make them children of the root in the food web with a weight~$\w(\rho p_i) = 1$ for each $i\in [q-1]$.
		In the food web, we add edges~$y_{i,2} p_i$ and~$y_{i+1,0} p_i$ for each $i\in [q-1]$.
		Finally, we set~$k' = k$ and set~$D' := D\cdot M$.
		
		\proofpara{Correctness}
		The reduction can be computed in polynomial time
		and it can be shown similarly as in \cite{PDD}, that this reduction is correct.
	\end{proof}
	\begin{figure}[t]
		\centering
		\begin{tikzpicture}[scale=0.7,every node/.style={scale=0.6}]
			\tikzstyle{txt}=[circle,fill=white,draw=white,inner sep=0pt]
			\tikzstyle{ndeg}=[circle,fill=blue,draw=black,inner sep=2.5pt]
			\tikzstyle{ndeo}=[circle,fill=orange,draw=black,inner sep=2.5pt]
			\tikzstyle{dot}=[circle,fill=white,draw=black,inner sep=1.5pt]
			
			\foreach \j/\name/\off in {0/0/0,2/1/0,4/2/0,9/q/-2} {
				\foreach \i in {0,...,2}
				\node[ndeg] (a\j\i) at (\j,\i) {};
				\foreach \i/\ii in {0/1,1/2}
				\draw[thick,arrows = {-Stealth[length=5pt]}] (a\j\i) to (a\j\ii);
				\node[txt] at (0.5+\j,0.5) {$P^{(\name)}$};
				
				\node[ndeo] (p\j) at (\j+1+\off,1) {};
			}
			
			\foreach \j in {0,2,4}{
				\draw[arrows = {-Stealth[length=3pt]}] (a\j2) .. controls +(right:2cm) and +(left:1cm) .. (p\j);
			}
			
			\foreach \j/\jj in {2/0,4/2,9/9}{
				\draw[arrows = {-Stealth[length=3pt]}] (a\j0) .. controls +(left:2cm) and +(right:1cm) .. (p\jj);
			}
			
			\foreach \j in {0,...,2}
			\node[dot] at (6+\j/2,1) {};
		\end{tikzpicture}
		\caption{
			An illustration of the food web in the reduction in the proof of~\Cref{cor:max-leaf-number}.
			The vertices of $X$ are blue and the new vertices are orange.
		}
		\label{fig:maxleaf}
	\end{figure}%
}
\fi

\subsection{Minimum Vertex Cover}
\label{sec:h-vc}
In this section, we parameterize \rwPDD with the minimum vertex cover number~($\vc$) of the food web~\Food.
\ifJournal
A vertex cover of~\Food is a set~$C \subseteq X$ such that~$u\in C$ or~$v\in C$ for each edge~$uv \in E(\Food)$.
\fi
We start with a useful pre-processing step.

\newcommand{\lemhHVC}[1]{
	\begin{lemma}[$\star$]
		#1
		Given an instance~$\Instance = (\Tree, \Food, k, D)$ of \rwPDD and a vertex cover~$C\subseteq X$ of~\Food of size~$\vc$,
		in~$\Oh(2^{\vc} \cdot (n+m))$ time, one can compute~$2^{\vc}$ instances~$\Instance_A = (\Tree_A, \Food_A, k_A, D_A)$ of \rwPDD, one for each~$A\subseteq C$, such that
		\Instance is a \yes-instance of \rwPDD, if and only if~$\Instance_A$ is a \yes-instance of~\rwPDD for some~$A\subseteq C$ and
		\ifJournal
		\begin{enumerate}
		\else
		\begin{inparaenum}[(1)]
		\fi
			\item the taxa in $A$ are children of the root of~$\Tree_A$,
			\item the height of $\Tree_A$ is at most the height of~$\Tree$,
			\item $\Tree_A$ contains~$\Oh(n)$ vertices,
			\item $u\not\in A$ and~$v\in A$ for each edge~$uv\in E(\Food_A)$,
			\item $\gamma$ remains unchanged on edges that are in both instances, and
			\item $A$ is a subset of each solution~$S$ of~$\Instance_A$.
		\ifJournal
		\end{enumerate}
		\else
		\end{inparaenum}
		\fi
	\end{lemma}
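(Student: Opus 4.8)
The starting point is that, because $C$ is a vertex cover of $\Food$, the set $X\setminus C$ is independent in $\Food$; hence every prey and every predator of a taxon $w\in X\setminus C$ lies in $C$. The plan is to branch over the intersection $A:=S\cap C$ of a hypothetical solution $S$ with the cover, producing one instance $\Instance_A$ per subset $A\subseteq C$. Fixing $A$ has a decisive consequence: since all prey of any $w\in X\setminus C$ lie in $C$ and $S\cap C=A$, whether $w$ may be added to $S$ does \emph{not} depend on the other taxa of $X\setminus C$. Concretely, $w$ is \emph{available} (for the guess $A$) exactly if $w$ is a source or at least $\gamma_w$ of its prey lie in $A$, and I would first prove this decoupling statement, as it is what tames the independent part of the food web.

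Given $A\subseteq C$, I would build $\Instance_A$ as follows. For the food web: delete the taxa $C\setminus A$ (they are excluded from the solution) together with every $w\in X\setminus C$ that is not available, and delete the incoming edges of every surviving $w$ (its availability is already decided by $A$, so it becomes a source). Finally delete every edge whose tail lies in $A$; what remains are exactly the edges from available taxa of $X\setminus C$ into $A$, which is property~(4), and all non-$A$ vertices are now sources. For each $v\in A$, set $a_v:=|\prey{v}\cap A|$: if $a_v\ge\gamma_v$ make $v$ a source, and otherwise re-weight the surviving incoming edges of $v$ uniformly so that $v$ now requires only $\gamma_v-a_v$ of its available prey; this keeps $\gamma$ restricted and does not increase $\Wmax$, while edges into vertices with no committed prey keep their original weight, which is what property~(5) records. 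For the tree I take the $(A,\emptyset)$-contraction of $\Tree$ (this cannot increase the height, giving~(2), and keeps $\Oh(n)$ vertices, giving~(3)) and then re-attach the taxa of $A$ as children of the root with large pendant weight (property~(1)). Setting $D_A:=D-\PD(A)$ plus the corresponding large forcing term, and choosing $k_A$ accordingly, makes the pendant weights force all of $A$ into every solution of $\Instance_A$ (property~(6)).

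For correctness I would verify both directions via $S\mapsto(S\cap C,\,S\setminus C)$. Forward: with $A:=S\cap C$, the set $S\setminus C$ consists of available taxa, each $v\in A$ receives $\gamma_v-a_v$ of its available prey from $S\setminus C$, and the contracted tree gives $\PD_{\Tree_A}(A\cup(S\setminus C))\ge D_A$ and the size bound. Backward: the pendant weights force $A$ into any solution $T$, $T\setminus A$ consists of available taxa, and a direct check (in the spirit of \Cref{lem:sub-solutions}) shows $A\cup(T\setminus A)$ is \gviable in $\Food$ with diversity at least $D$ and size at most $k$; by \Cref{lem:solution-size} we may take sizes to match. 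Each instance is computed in $\Oh(n+m)$ time — computing the $a_v$, deciding availability, and performing the linear-time contraction — so the $2^{\vc}$ instances cost $\Oh(2^{\vc}(n+m))$ in total. The step I expect to be the main obstacle is the committed-prey bookkeeping for $v\in A$: correctly reducing each requirement from $\gamma_v$ to $\gamma_v-a_v$ after the intra-$A$ edges are removed, while simultaneously keeping $\gamma$ restricted, not blowing up the tree beyond $\Oh(n)$ vertices, and respecting the weight-invariance demanded by property~(5); reconciling all six conditions at once is the delicate part of the argument.
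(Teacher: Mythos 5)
Your overall strategy matches the paper's: branch over $A := S\cap C$, use the independence of $X\setminus C$ to decide locally which taxa outside the cover are still saveable, force $A$ into every solution via heavy pendant edges at the root, and adjust $k$ and $D$ accordingly. However, there is a genuine gap exactly at the point you flag as delicate, and your proposed fix does not work. Property~(5) requires that $\gamma$ be unchanged on \emph{every} edge present in both instances, but your construction re-weights the surviving incoming edges of each $v\in A$ with $a_v:=|\prey{v}\cap A|>0$ so that $v$ only needs $\gamma_v-a_v$ prey; those edges are in both instances and now carry different weights, so (5) fails. (Your reading that only edges into vertices with no committed prey must keep their weight is not what the lemma states.) The idea you are missing is the paper's dummy-prey gadget: introduce $|A|$ \emph{new} taxa $P$, attach them (like $A$) as children of the root with huge weight so that every solution must contain $P$, raise the budget to $k_A := k+|A|$, and for each $v\in A$ add $a_v$ edges from vertices of $P$ into $v$ carrying \emph{the same} weight as $v$'s other incoming edges. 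Since $P$ is forced into every solution, these edges always contribute, which lowers the effective requirement of $v$ to $\gamma_v-a_v$ while leaving $\gamma$ untouched on all original edges and keeping it restricted. This is precisely why the paper remarks that the lemma is easy without condition~(5) (your re-weighting would suffice there) but needs the harder construction to cover \hPDD, where the weights are dictated by the in-degrees.

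Two smaller points. First, your tree construction takes the $(A,\emptyset)$-contraction, which leaves the dying taxa ($C\setminus A$ and the unavailable taxa of $X\setminus C$) as leaves of $\Tree_A$ even though you delete them from $\Food_A$; the instance is then not well-formed. You need the $(A,B)$-contraction where $B$ is the full set of dying taxa, as the paper does with $B = (C\setminus A)\cup R$. Second, your availability test (``at least $\gamma_w$ prey of $w$ lie in $A$'') is the right general condition for \rwPDD and is arguably cleaner than the paper's formulation of $R$; that part of your argument is fine.
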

}
\lemhHVC{\label{lem:h-vc}}

Intuitively, $A' := C\setminus A$ and some taxa in~$X\setminus C$ can not survive, after fixing~$A$.
\ifJournal
In \rwPDD, we can prove this claim a bit easier by removing the condition that $\gamma$ has to remain unchanged on edges that are in both instances.
However, to make this lemma hold also for \hPDD, we prove this more challenging variant.
\fi
\newcommand{\ProofLemhHVC}{
	\begin{proof}
		\proofpara{Intuition}
		By the selection of~$A$, we know that~$A' := C\setminus A$ and some taxa in~$X\setminus C$ can not survive.
		We introduce a set~$Q$ that will mark the knowledge of how many prey have already been saved.
		
		\begin{figure}[t]
			\tikzstyle{bold}=[draw, line width=2pt]
			\tikzstyle{optional}=[dashed]
			\tikzstyle{path}=[decorate, decoration={snake, amplitude=.6mm}]
			
			\tikzstyle{small}=[inner sep=2pt]
			\tikzstyle{tiny}=[inner sep=1.7pt]
			\tikzstyle{textnode}=[inner sep=0pt]
			
			\tikzstyle{triangle}=[draw, regular polygon, regular polygon sides=3]
			\tikzstyle{vertex}=[circle, draw, fill=white]
			\tikzstyle{reti}=[vertex, fill=black]
			\tikzstyle{leaf}=[vertex, rectangle]
			\tikzstyle{leaf2}=[vertex, regular polygon, regular polygon sides=3]
			
			\tikzstyle{smallvertex}=[vertex, small]
			\tikzstyle{smallleaf}=[leaf, inner sep=3.3pt]
			\tikzstyle{smallleaf2}=[leaf2, inner sep=1.7pt]
			\tikzstyle{smalltriangle}=[triangle, inner sep=1.5pt]
			\tikzstyle{smallreti}=[reti, small]
			
			\tikzstyle{match}=[edge,line width=3pt]
			\tikzstyle{edge}=[draw,-]
			\tikzstyle{arc}=[draw,arrows={-Latex[length=6pt]}]
			\tikzstyle{boldarc}=[draw, bold, arrows={-Latex[length=10pt]}]
			\tikzstyle{revarc}=[draw, arrows={Latex[length=6pt]-}]
			\tikzstyle{boldrevarc}=[draw, bold, arrows={Latex[length=10pt]-}]
			
			\tikzstyle{rbox}=[rounded corners, color=red, fill=red!7]
			\tikzstyle{gbox}=[rounded corners, color=green!90!yellow!60!black!70!, fill=green!7]
			
			\centering
			\begin{tikzpicture}[scale=.9,every node/.style={scale=.9}]
				\draw[gbox] (1.25,2.75) rectangle (1.25+1.6666+.5,1.75);
				\node[textnode, color=green!90!yellow!60!black!70!] at (1,2.5) {$A$};
				\draw[rbox] (6.75,2.75) rectangle (6.75-1.6666-.5,1.75);
				\node[textnode, color=red] at (7,2.5) {$A'$};
				\draw[rbox] (3.75,.25) rectangle (6.25,-.75);
				\node[textnode, color=red] at (6.5,-.5) {$R$};

				\foreach \j in {1,...,4} {
					\node[smallvertex, label=above:$v_\j$] (m\j) at (\j*1.6666-0.1666,2) {};
				}
				\foreach \j in {1,...,7} {
					\node[smallvertex, label=below:$u_\j$] (i\j) at (\j,0) {};
				}

				\draw[arc] (m1) -- (m2);
				\draw[arc] (m3) -- (m2);
				\draw[arc] (m4) to[bend right=25] (m1);

				\foreach \m/\i in {1/1,1/3,2/5,3/4,3/5,3/3,4/5,4/6} 
				\draw[arc] (m\m) -- (i\i);

				\foreach \i/\m in {1/2,2/1,2/3,3/2,4/2,6/3,7/4} 
				\draw[arc] (i\i) -- (m\m);
			\end{tikzpicture}
			\begin{tikzpicture}[scale=.9,every node/.style={scale=.9}]
				\draw[gbox] (1.25,2.75) rectangle (1.25+1.6666+.5,1.75);
				\node[textnode, color=green!90!yellow!60!black!70!] at (1,2.5) {$A$};
				\draw[gbox] (.25,1.65) rectangle (-.75,.35);
				\node[textnode, color=green!90!yellow!60!black!70!] at (-.5,1.8) {$P$};

				\foreach \j in {1,2} {
					\node[smallvertex, label=above:$v_\j$] (m\j) at (\j*1.6666-0.1666,2) {};
				}
				\foreach \j/\label in {1/1,2/2,3/3,4/7} {
					\node[smallvertex, label=below:$u_\label$] (i\j) at (\j,0) {};
				}
				
				\foreach \j in {1,2} {
					\node[smallvertex, label=left:$p_\j$] (p\j) at (0,-.2+\j*.8) {};
				}

				\foreach \i/\m in {1/2,2/1,3/2}
				\draw[arc] (i\i) -- (m\m);
				
				\draw[arc] (p1) -- (m2);
				

				\draw (-1,-.75) -- ++ (0,3.5);
			\end{tikzpicture}
			\caption{\textbf{Left}: An example food web with indicated vertex sets.
				\textbf{Right}: The transformation that is done to this food web in the algorithm of \Cref{lem:h-vc}. (The phylogenetic tree is omitted.)}
			\label{fig:vc-preprocessing}
		\end{figure}
		
		\proofpara{Algorithm}
		For example, consider \Cref{fig:vc-preprocessing}.
		Iterate over subsets~$A\subseteq C$.
		We want~$A$ to be the set of taxa that need to survive and~$A' := C\setminus A$ to die out.
		Because~$C$ is a vertex cover,~$I := X\setminus C$ is an independent set.
		Let~$R \subseteq I$ be the set of taxa~$v \in I$ for which~$|\prey{v} \cap A| < |\prey{v} \cap A'|$ holds.
		
		Let~$P := \{v_1,\dots,v_{|A|}\}$ be a set of new taxa and let~$M$ and~$N$ be big integers.
		Compute the~$(A,A'\cup R)$-contraction~$\Tree'$ of~$\Tree$ and multiply each edge-weight with~$M$.
		Add~$A \cup P$ as new children to the root~$\rho$ of~$\Tree'$.
		Set the weight of edges~$\rho u$ to~$N$ for~$u\in A \cup P$.
		This completes the construction of~$\Tree_A$.
		
		To obtain~$\Food_A$, we add~$P$ to \Food.
		For each~$v\in A$, add~$|\prey{v} \cap A|$ edges~$wv$ to~$\Food_A$ with~$w\in P$, which all have the weight of all other edges incoming at~$v$.
		It does not matter which vertices~$w$ of~$P$ are chosen.
		Then remove~$A'$ with all incident edges from the food web. Remove all edges outgoing from~$A$.
		
		Finally, set~$k_A := k + |A|$ and~$D_A := N\cdot (D - \PD(A)) + 2M\cdot |A|$.

		\proofpara{Correctness}
		Conditions 1 to 4 hold by the construction.
		Observe that for~$M$ big enough, $A \cup P$ is a subset of every solution.
		It remains to show that \Instance is a \yes-instance of \rwPDD if and only if $\Instance_A$ is a \yes-instance of \rwPDD for some~$A\subseteq C$.
		
		Let \Instance be a \yes-instance of \rwPDD with solution~$S$.
		Define~$A := S \cap C$.
		Each vertex in~$I$ has all neighbors in~$C$.
		Each taxon in~$R$ has more prey in~$A' := C\setminus A$ than in~$A$.
		Therefore,~$R\cap S = \emptyset$.
		Prey~$u\in A$ of taxa~$v\in A$ are replaced with taxa~$u'\in P$.
		Therefore,~$S\cup P$ is \gviable in~$\Instance_A$, with a size of~$|S| + |P| = |S| + |A| \le k_A$, and~$\PDsub{\Tree_A}(S \cup P) = N \cdot (\PD(S) - \PD(A)) + M \cdot (|A| + |P|) \ge N \cdot (D - \PD(A)) + 2M \cdot |A| = D_A$.
		
		Conversely, let $\Instance_A$ is a \yes-instance of \rwPDD for~$A\subseteq C$ with solution~$S$.
		For a big enough~$M$, we can assume~$A\cup P \subseteq S$.
		Then, with an analogous argumentation,~$S' := S \setminus P$ is~\gviable in~$\Food$,~$|S'| \le k$ and~$\PD(S') \ge D$.

		\proofpara{Running Time}
		The iteration over the subsets of~$C$ takes~$2^{|C|}$ time.
		For a given set~$A$, we can compute~$R$ in~$\Oh(n+m)$ time.
		The tree~$\Tree_A$ and the food web~$\Food_A$ can be computed in~$\Oh(n+m)$ time.
	\end{proof}
}

\ifJournal
In the following, we use the result of \Cref{lem:h-vc}
and a dynamic programming algorithm over the phylogenetic tree
to prove that \rwPDD is \XP with respect to the food web's vertex cover
number and \FPT with respect to the vertex cover number plus~$\Wmax$.
Afterward, we prove with integer linear programming that \rwsPDD is \FPT with respect to the vertex cover number.
\else
\Cref{lem:h-vc} is used in both of the following theorems.
\fi
\newcommand{\thmhHVC}[1]{
	\begin{theorem}[$\star$]
		#1
		Let~$\Instance = (\Tree,\Food,k,D)$ be an instance of \rwPDD and~\mbox{$C \subseteq X$} a vertex cover of \Food of size~$\vc$,
		\Instance can be solved in~$\Oh((\Wmax+1)^{2\vc} \cdot (n+m)k)$~time.
	\end{theorem}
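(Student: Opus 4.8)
The plan is to feed the pre-processing of \Cref{lem:h-vc} into a dynamic program over the phylogenetic tree. First I would apply \Cref{lem:h-vc} to reduce, in $\Oh(2^{\vc}(n+m))$ time, to the $2^{\vc}$ instances $\Instance_A=(\Tree_A,\Food_A,k_A,D_A)$, one per $A\subseteq C$, and solve each in turn. The structural payoff is Condition~4: in $\Food_A$ every edge runs from a vertex outside $A$ into $A$, so no vertex of $X\setminus A$ has an incoming edge and hence \emph{every} such vertex is a source that may be added to any \gviable set with no restriction. Together with Condition~6 ($A$ lies in every solution), the only surviving viability constraints are those of the at most $\vc$ taxa in $A$. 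For each $v\in A$ I would pre-compute the number $t_v$ of \emph{additional} prey still needed: in $\Food_A$ the gadget prey of $v$ (the copies in $P$ supplied by the construction) are forced into the solution and already cover $f_v$ of them, so $t_v:=\max\{0,\gamma_v-f_v\}$, and $t_v\le\gamma_v\le\Wmax$ (recall $\Wmax\le k$). Thus $v$ is viable exactly when at least $t_v$ of its prey among the free sources $X\setminus A$ are selected.

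Second, I would run a bottom-up dynamic program on $\Tree_A$. For a tree node $x$ the table $\DP_x[\kappa,\vec c]$ stores the maximum phylogenetic diversity achievable inside the subtree rooted at $x$ by selecting $\kappa$ leaves and reaching the \emph{prey profile} $\vec c=(c_v)_{v\in A}$, where $c_v\in\{0,\dots,t_v\}$ counts, capped at $t_v$, how many of the selected subtree leaves are prey of $v$. In the leaf base case, selecting a leaf $u$ contributes $1$ to $c_v$ for every $v\in\predators{u}\cap A$ and pays the incident tree edge; at an internal node the child tables are merged by a $(\max,+)$-convolution over both the budget index $\kappa$ and the profile $\vec c$ (the latter being the coordinatewise capped sum). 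At the root I would return \yes{} iff there is an entry with $c_v=t_v$ for all $v\in A$, with $\kappa\le k_A$, and value at least $D_A$. Correctness is immediate from the source property above: a leaf set is \gviable in $\Food_A$ precisely when the forced taxa are present and each $v\in A$ has collected its $t_v$ prey, which is exactly what a full profile $\vec c=(t_v)_v$ records.

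For the running time, the profile space has $(\Wmax+1)^{|A|}$ entries, so a profile convolution costs $(\Wmax+1)^{2|A|}$, while the budget convolutions, charged to pairs of leaves at their lowest common ancestor and truncated at $k_A=\Oh(k)$, sum to $\Oh(nk)$ over the whole tree by the standard tree-knapsack analysis. Since the two convolutions factorize, instance $\Instance_A$ is solved in $\Oh\bigl((\Wmax+1)^{2|A|}\,nk+m\bigr)$ time, and summing over $A\subseteq C$ via the binomial theorem gives $\Oh\bigl((\Wmax+1)^{2\vc}(n+m)k\bigr)$ in total. I expect the main obstacle to lie in the running-time bookkeeping rather than in correctness: one must check that the budget and profile convolutions genuinely factorize, so that the profile factor multiplies the $\Oh(nk)$ tree-knapsack cost instead of inflating it, and that capping each coordinate $c_v$ at $t_v\le\Wmax$ (rather than at the possibly huge in-degree of $v$) is what keeps the state space, and hence the whole algorithm, within the claimed bound.
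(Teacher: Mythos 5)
Your proposal is correct and follows essentially the same route as the paper: apply \Cref{lem:h-vc}, then run a bottom-up dynamic program over $\Tree_A$ whose state is a tree node, a budget, and a capped prey-count profile over $A$ merged by $(\max,+)$-convolution; your ``capped exact count'' $\vec c$ is just a reformulation of the paper's lower-bound function $f$, and your explicit use of Condition~4 and the tree-knapsack charging argument only makes explicit what the paper leaves implicit.
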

}
\thmhHVC{\label{thm:h-vc}}
\ifConference
\newpage
\fi
\newcommand{\ProofThmhHVC}{
	\begin{proof}
		Apply \Cref{lem:h-vc}.
		Solve each of the instances~$\Instance_A = (\Tree_A, \Food_A, k_A, D_A)$ and return \yes, if any of them is a \yes-instance.
		Otherwise, if none of these is a \yes-instance, then return \no.
		
		To show how to solve~$\Instance_A$,
		we present a dynamic programming algorithm~$\DP$ over the tree~$\Tree_A$ which generalizes the one presented in~\cite{pardi07}.
		For any vertex~$v$ of the phylogenetic tree~$\Tree_A$, we define~$\Tree_A^{(v)}$ to be the subtree rooted at~$v$ and~$\off(v)$ to be the leaves in~$\Tree_A^{(v)}$.
		For a vertex~$v$ with children~$w_1,\dots,w_p$, we define~$\Tree_A^{(v,i)}$ for~$i\in [p]$ to be the subtree rooted at~$v$ where only the first~$i$ children of~$v$ are considered.
		Then,~$\off^{(i)}(v)$ are the leaves in~$\Tree_A^{(v,i)}$.
		
		\proofpara{Table Definition}
		We define~$\SSS_{v,f,k}$, for a vertex~$v$ of~$\Tree_A$, a function~$f: A \to \mathbb{N}_0$, and an integer~$k\in [k_A]_0$, to be the family of sets~$S\subseteq \off(v)$ which have a size of at most~$k$ and for which each~$a\in A$ has at least~$f(a)$ prey in~$S$.
		More formally,~$\SSS_{v,f,k} := \{ S \subseteq \off(v) \mid |S|\le k, |\prey{a} \cap S| \ge f(a) \forall a\in A \}$.
		For a vertex~$v$ with~$p$ children and an integer~$i\in [p]$, we define~$\SSS_{v,i,f,k}$ to be the subset of~$\SSS_{v,f,k}$, where~$S \subseteq \off^{(i)}(v)$.
		
		We define entry~$\DP[v,f,k]$ to be the maximum phylogenetic diversity of a set~$S \in \SSS_{v,f,k}$ in~$\Tree_A^{(v)}$.
		More formally,
		$$
		\DP[v,f,k] := \max\{ \PDsub{\Tree_A^{(v)}}(S) \mid S \in \SSS_{v,f,k} \}.
		$$
		Analogously,~$\DP'[v,i,f,k] := \max\{ \PDsub{\Tree_A^{(v,i)}}(S) \mid S \in \SSS_{v,i,f,k} \}$.
		
		\proofpara{Algorithm}
		As a base case, for each leaf~$x$, store~$\DP[x,f,k] = 0$ if~$k\ge 1$, and~$f(a) = 0$ for each~$a$ with~$x\not\in \prey{a}$, and~$f(a) \le 1$ for each~$a$ with~$x \in \prey{a}$.
		Otherwise, store~$\DP[x,f,k] = -\infty$.
		
		Let~$v$ be a vertex with children~$w_1,\dots,w_p$.
		Set~$\DP'[v,1,f,k] = \DP[v,f,k] + \delta_{k\ge 1} \cdot \w(vw_1)$.
		To compute further values, we use the following recurrences.
		\begin{eqnarray}
			\label{eqn:h-vc-XP}
			&& \DP'[v,i+1,f,k]\\
			\nonumber
			&=& \max_{k' \in [k]_0, f' \le f} \DP'[v,i,f-f',k-k'] + \DP[w_{i+1},f',k'] + \delta_{k'\ge 1} \cdot \w(vw_{i+1})
		\end{eqnarray}
		
		Finally, we set~$\DP[v,f,k] = \DP'[v,p,f,k]$.
		Let~$\rho$ be the root of~$\Tree_A$.
		Return \yes, if~$\DP[\rho,f,k_A] \ge D_A$, for some function~$f$ with~$f(a) \ge \gamma_a$ for each~$a\in A$.
		Otherwise, return \no.
		
		\proofpara{Correctness}
		Observe that for each~$S \in \SSS_{v,i+1,f,k}$, the set~$S' := S \cap \off(w_{i+1})$ is in~$\SSS_{w_{i+1},f',k'}$,
		where~$k' = |S'|$ and~$f'(a) := |S' \cap \prey{v}|$ for each~$a\in A$,
		and~$S \cap \off^{(i)}(v)$ is in~$\SSS_{v,i,f-f',k-k'}$.
		
		Conversely, for~$S_1 \in \SSS_{v,i,f_1,k_1}$ and~$S_2 \in \SSS_{w_{i+1},f_2,k_2}$,
		the set~$S_1 \cup S_2$ is in~$\SSS_{v,i+1,f_1+f_2,k_1+k_2}$.
		Then, the correctness of \Recc{eqn:h-vc-XP} follows from the observation that~$\PDsub{\Tree_A^{(v)}}(S) = \PDsub{\Tree_A^{(w_{i+1})}}(S) + \delta_{S \ne \emptyset} \cdot \w(vw_{i+1})$ for each~\mbox{$S \in \SSS_{w_{i+1},f,k}$}.

		The rest of the correctness follows intuitively.

		\proofpara{Running Time}
		As~$\Tree_A$ contains at most~$\Oh(n)$ vertices, both tables contain~$\Oh((\Wmax+1)^{\vc} \cdot nk)$ entries.
		
		The base cases can be checked in~$\Oh(m)$ time.
		\Recc{eqn:h-vc-XP} can be computed in~$\Oh((\Wmax+1)^{\vc} \cdot k)$ time.
		The overall running time is~$\Oh((\Wmax+1)^{2\vc} \cdot (n+m)k)$.
	\end{proof}
}

\ifJournal
In the following, we show how to,
after applying \Cref{lem:h-vc},
instances of~\rwsPDD can be reduced to instances of integer linear programming feasibility
(\ILPF), where the number of variables only depends on the size of the vertex cover of the food web.
\else
In the following, we show how to,
reduce instances of~\rwsPDD to instances of integer linear programming feasibility
(\ILPF) with~$2^{\vc}$ variables.
\fi
\ILPF on~$n$ variables can be solved using~$n^{2.5n+o(n)} \cdot |\Instance|$ arithmetic operations, where~$|\Instance|$ is the input length~\cite{frank,lenstra}.
Using a randomized algorithm even a running time of~$\log(2n)^{\Oh(n)}$ is possible~\cite{reis}.
It follows that \rwsPDD is \FPT when parameterized with the vertex cover number.
\begin{theorem}
	\label{thm:hs-vc}
	Let~$\Instance = (\Tree,\Food,k,D)$ be an instance of \rwsPDD and~$C \subseteq X$ a vertex cover of size~$\vc$.
	Then, \Instance can be solved in~$(\vc+1)^{\Oh(2^{\vc})} \cdot (n\log n + m)$ time.
\end{theorem}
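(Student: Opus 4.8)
The plan is to invoke \Cref{lem:h-vc} to split \Instance into~$2^{\vc}$ instances~$\Instance_A=(\Tree_A,\Food_A,k_A,D_A)$ and to solve each of them by a single call to an \ILPF-solver on~$\Oh(2^{\vc})$ variables. Fix~$A\subseteq C$. Since~\Tree is a star and the lemma keeps the height of~$\Tree_A$ bounded by that of~\Tree, the tree~$\Tree_A$ is again a star, so~$\PDsub{\Tree_A}(S)$ is just the total leaf-weight of~$S$. By the lemma, the set~$A$ together with the auxiliary source set~$P$ lies in every solution, every edge of~$\Food_A$ points into~$A$, and the independent set~$I:=X\setminus C$ consists entirely of sources of~$\Food_A$ (the former prey-edges of each~$v\in I$ ran from~$C$ and were deleted together with~$C\setminus A$ or as out-edges of~$A$). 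Hence the only viability constraints concern the vertices of~$A$: writing~$\gamma_a$ for the number of prey that~$a$ requires and noting that~$|\prey{a}\cap A|$ of them are already supplied by the forced taxa of~$P$, the vertex~$a$ needs at least~$r_a:=\max(0,\gamma_a-|\prey{a}\cap A|)$ of its remaining prey, all lying in~$I$, to be selected.

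The key structural observation is that two taxa of~$I$ are interchangeable with respect to all constraints precisely when they have the same predator set inside~$A$. I would therefore partition~$I$ into at most~$2^{|A|}\le 2^{\vc}$ \emph{types}, collecting in~$I_\tau$ (for~$\tau\subseteq A$) the taxa whose predator set in~$A$ equals~$\tau$, and sort each~$I_\tau$ by leaf-weight; this sorting accounts for the~$n\log n$ factor. Because both the budget and the covering requirements~$r_a$ depend only on how many taxa of each type are chosen, whereas every chosen taxon contributes its own weight to the diversity, an optimal selection takes, within each type, a prefix of the weight-sorted list. Thus the whole decision reduces to choosing a count~$x_\tau\in\{0,\dots,|I_\tau|\}$ per type, subject to the budget~$\sum_\tau x_\tau\le\kappa$ with~$\kappa:=k_A-|A|-|P|$ and the covering constraints~$\sum_{\tau:\,a\in\tau}x_\tau\ge r_a$ for every~$a\in A$, while maximizing~$\sum_\tau g_\tau(x_\tau)$, where~$g_\tau(x)$ is the sum of the~$x$ largest weights in~$I_\tau$.

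The one point needing care, and the main obstacle, is that~$\sum_\tau g_\tau(x_\tau)$ is not linear: each~$g_\tau$ is a concave, piecewise-linear prefix-sum function. I would linearize it in the standard epigraph fashion by adding, for each type, a variable~$z_\tau$ together with the supporting-line inequalities~$z_\tau\le g_\tau(\ell)+w_{\tau,\ell+1}\cdot(x_\tau-\ell)$ for~$\ell\in\{0,\dots,|I_\tau|-1\}$, where~$w_{\tau,\ell+1}$ is the~$(\ell+1)$-st largest weight in~$I_\tau$. As~$g_\tau$ is the lower envelope of these lines, every feasible~$z_\tau$ satisfies~$z_\tau\le g_\tau(x_\tau)$, with equality attainable at integral~$x_\tau$; the single constraint~$\sum_\tau z_\tau\ge D_A-\PDsub{\Tree_A}(A\cup P)$ then expresses~$\PDsub{\Tree_A}(S)\ge D_A$ exactly. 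This produces an \ILPF instance with only~$\Oh(2^{\vc})$ variables but~$\Oh(n)$ constraints, which is exactly the trade-off we want: the variable count stays bounded in~$\vc$ while the concavity is absorbed into polynomially many linear constraints.

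Finally I would solve this \ILPF instance with the algorithm of~\cite{reis} (or~\cite{lenstra}); on~$\Oh(2^{\vc})$ variables this costs~$(\vc+1)^{\Oh(2^{\vc})}$ times a polynomial factor in the~$\Oh(n+m)$-sized input. Combined with the~$2^{\vc}$-fold application of \Cref{lem:h-vc} and the per-type sorting, this yields the claimed bound~$(\vc+1)^{\Oh(2^{\vc})}\cdot(n\log n+m)$. Once the non-linear weight objective is handled by the epigraph linearization without inflating the number of variables beyond~$\Oh(2^{\vc})$, the remaining verification—that the count-vector formulation is equivalent to the original selection problem on~$\Instance_A$—is routine bookkeeping that follows from the prefix-optimality argument above.
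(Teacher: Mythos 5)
Your proposal is correct and follows essentially the same route as the paper: apply \Cref{lem:h-vc}, group the independent-set taxa by their predator set in~$A$, exploit prefix-optimality of the weight-sorted lists within each group, and feed the resulting count-vector formulation to an \ILPF-solver on~$\Oh(2^{\vc})$ variables. The only difference is that you make the linearization of the concave prefix-sum objective explicit via epigraph variables~$z_\tau$ and supporting-line inequalities, whereas the paper writes the piecewise-linear~$\Phi_M$ (defined as a lower envelope of lines) directly into the diversity constraint --- your version is, if anything, slightly more careful on that point.
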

\begin{figure}[t]
	\begin{minipage}[t]{20ex}
		\begin{tikzpicture}[domain=-.1:8, scale=.9, every node/.style={scale=.95}]
			\draw[very thin,color=gray] (-.1,-.1) grid (7.9,3.9);
			
			\draw[->] (-.2,0) -- (8.2,0) node[right] {$x$};
			\draw[->] (0,-.2) -- (0,4.2) node[above] {$f(x)$};
			
			\foreach \i in {1,...,3}
			\node at (-.3,\i) {$\i0$};
			\foreach \i in {1,...,7}
			\node at (\i,-.3) {$\i$};
			
			\draw[domain=-.1:3.54545454] plot (\x,1.1*\x) node[right] {$\Phi_M^{(1)}$};
			\draw[domain=-.1:4.5] plot (\x,.8*\x+.3) node[right] {$\Phi_M^{(2)}$};
			\draw[domain=-.1:6] plot (\x,.5*\x+.9) node[above] {$\Phi_M^{(3)} = \Phi_M^{(4)}$};
			\draw[domain=-.1:7.333333] plot (\x,.3*\x+1.7) node[above right] {$\Phi_M^{(5)}$};
			\draw plot (\x,.2*\x+2.2) node[right] {$\Phi_M^{(6)}$};
			\draw plot (\x,.1*\x+2.8) node[below] {$\Phi_M^{(7)}$};
			
			\draw[line width=1.5pt, color=red] (0,0) -- (1,1.1) -- (2,1.9) -- (3,2.4) -- (4,2.9) -- (5,3.2) -- (6,3.4) -- (7,3.5) {};
			\node[color=red] at (6.8,3.2) {$\Phi_M$};
		\end{tikzpicture}
	\end{minipage}\hfill
	\begin{minipage}[t]{20ex}
		\vspace{-4.5cm}
		\hspace{-1cm}
		\myrowcolsnum{2}
		\begin{tabular}{lcc}
			\toprule
			& ~~$\w(\rho v_i)$~~ & ~~$\sum$~~ \\
			\midrule
			$v_1$  & 11      & 11      \\
			$v_2$  & 8       & 19      \\
			$v_3$  & 5       & 24      \\
			$v_4$  & 5       & 29      \\
			$v_5$  & 3       & 32      \\
			$v_6$  & 2       & 34      \\
			$v_7$  & 1       & 35      \\
			\bottomrule
		\end{tabular}
	\end{minipage}
	\caption{An illustrative example of how to compute the function~$\Phi_M$ for values of~$\w(\rho v_i)$.}
	\label{fig:Phi}
\end{figure}
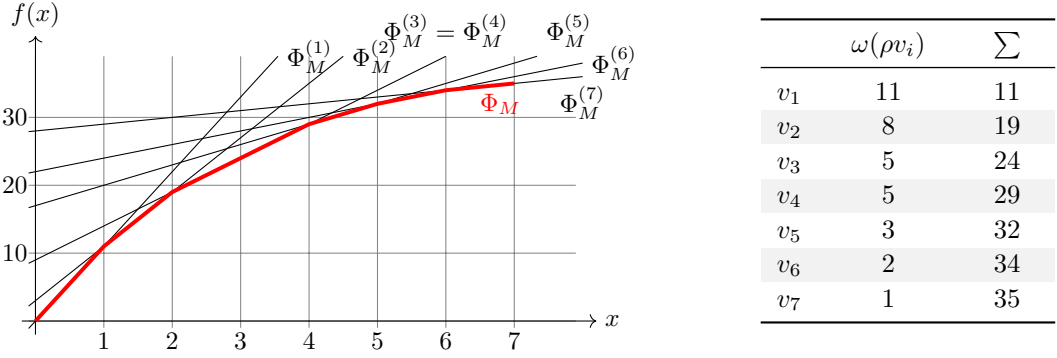
\begin{proof}
	%
	%
	%
	\proofpara{Algorithm and Correctness}
	Apply~\Cref{lem:h-vc} and iterate over the instances~$\Instance_A = (\Tree_A, \Food_A, k_A, D_A)$ of~\rwsPDD.
	We provide a reduction from~$\Instance_A$ to an instance of \ILPF with~$2^{|A|}$ variables.
	
	For subsets~$M$ of~$A$, define~$[M]_{\sim}$ as the set of taxa~$v\in X\setminus A$ that have~$M$ as predators.
	For each~$a\in A$, define~$A_a$ to be the family of sets~$S\subseteq A$ containing~$a$.
	We define an instance of \ILPF, with variables~$x_M$, upper bounded by~$q_M := |[M]_{\sim}|$, indicating how many taxa are chosen from~$[M]_{\sim}$.
	Recall that~$\gamma_a$ is the number of prey of a taxon~$a$ that have to be saved to save~$a\in A$.
	\begin{align}
		\label{eqn:ILPF-k}
		\sum_{M\subseteq A} x_M~\le~& k_A - |A|\\
		\label{eqn:ILPF-A}
		\sum_{M\in A_a} x_M~\ge~& \gamma_a & \forall a\in A\\
		\label{eqn:ILPF-D}
		\sum_{M\subseteq A} \Phi_M(x_M)~\ge~& D_A - \PDsub{\Tree_A}(A)\\
		\label{eqn:ILPF-x}
		\sum_{M\subseteq A} x_M~\le~& q_M
	\end{align}
	
	Recall, we have to save all taxa in~$A$, by~\Cref{lem:h-vc}.
	Inequality~(\ref{eqn:ILPF-k}) ensures that at most~$k_A$ taxa are saved.
	Inequality~(\ref{eqn:ILPF-A}) ensures that for each taxon~$a\in A$ the necessary number of prey are saved so that the solution is \gviable.
	Inequality~(\ref{eqn:ILPF-x}) provides the (logical) upper bound of~$x_M$.
	With $\Phi_M(x_M)$, the best phylogenetic diversity that can be achieved when~$x_M$ taxa are saved from~$M$ is given.
	Since all taxa in~$A$ have to be saved,~$D_A - \PDsub{\Tree_A}(A)$ diversity has to be contributed overall from the taxa~$X\setminus A$.
	Thus, Inequality~(\ref{eqn:ILPF-D}) ensures the diversity threshold is met.
	It remains to show how to compute~$\Phi_M(x_M)$.
	We do this with an approach similar to the one used to show that \KP is \FPT when parameterized by the number of numbers~\cite{etscheid}.
	An example is given in~\Cref{fig:Phi}.
	
	For each~$M\subseteq A$, order the taxa~$v_1, \dots, v_{q_M}$ of~$[M]_{\sim}$, such that~$\w(\rho v_i) \ge \w(\rho v_{i+1})$, for each~$i\in [q_M]$, where~$\rho$ is the root of~$\Tree_A$.
	For~$i\in [q_M]$, define linear functions~$\Phi_M^{(i)}$ with~$\Phi_M^{(i)}(i-1) = \sum_{j=1}^{i-1} \w(\rho v_j)$ and~$\Phi_M^{(i)}(i) = \sum_{j=1}^{i} \w(\rho v_j)$.
	Define~$\Phi_M(j) := \min_{i\in [q_M-1]} \Phi_M^{(i)}(j)$.
	This completes the algorithm.
	The correctness follows from the correct definition of the \ILPF instance.\todos{In Journal version: Maybe prove correctness of~$\Phi_M$}
	
	\proofpara{Running Time}
	The algorithm in \Cref{lem:h-vc} returns~$2^{\vc}$ instances in~$\Oh(2^{\vc} \cdot (n+m))$~time.
	The sets~$[M]_{\sim}$ can be computed in time~$\Oh(2^{\vc}+n+m)$ by an iteration over~$X$ and computing the predators.
	All functions~$\Phi_M$ are computed in~$\Oh(2^{\vc} \cdot n\log n)$ time.
	Then, the overall running time is dominated by the running time of \ILPF, which is~$\log(2\cdot 2^{\vc})^{\Oh(2^{\vc})} = (\vc+1)^{\Oh(2^{\vc})}$.
\end{proof}

\subsection{Distance to Cluster}
\label{sec:h-cluster}
In this section, we consider \rwPDD on instances where the food web is almost a cluster graph.
In a cluster graph, every connected component is a clique.
Cluster graphs generalize cliques and independent sets.

\ifJournal
The problem definitions of \PDD, \fPDD, and~\hPDD interact differently with cliques as food webs.
Let a clique with topological order~$x_0,\dots,x_\ell$ be given.
In \PDD, each clique is essentially an out-star, because once~$x_0$ (the source of the clique) is saved, each other vertex can be chosen without restrictions~\cite{PDD}.
In \fPDD, this property does not hold any longer.
But in this version, we can save taxon~$x_{i}$, after~$i$ taxa are saved from the clique.
Therefore, cliques essentially are equivalent to a path\todos{Quote f-PDD}.
In \hPDD, it becomes a bit trickier.
After saving~$x_0$, we are able to save taxon~$x_1$ and~$x_2$, because~$x_i$ has~$i$ incoming edges and it is therefore sufficient to save one prey for~$i\in \{1,2\}$.
Likewise, after saving~$i$ taxa, for any~$i$, we can save taxa~$x_2,\dots,x_{2i}$ without restrictions.
\fi

It remains open whether \hPDD---and therefore \rwPDD---can be solved in polynomial time
\ifJournal
on instances where the food web is a clique, while \PDD and \fPDD are almost trivial in this case.
In \PDD, it is sufficient to save the source, reduce to \MPD, and then run Faith's greedy~\cite{FAITH1992}.
In \fPDD, the topological order of the food web provides an order in which taxa are to be saved.
\else
on cliques.
In \PDD, it is sufficient to save the source and reduce to \MPD.
In \fPDD, we have to save taxa in topological order.
\fi

In the following, we observe that~\hPDD is \NP-hard if the food web is a cluster graph and show that \rwsPDD admits an \XP-algorithm when parameterized by the number of taxa that need to be removed to obtain a cluster.
The hardness result follows from
\ifConference
a result in~\cite{faller}.
\else
\Cref{cor:Faller}.
We add one taxon for each connected component in the topological order between the two topmost vertices.
The edge weights of the phylogenetic tree are blown up by a big constant, and these new taxa are added as children of the root with a weight of 1.
Consider~\Cref{fig:cluster} for an illustration.
This finishes the reduction.
\begin{figure}[t]
	\centering
	\begin{tikzpicture}[scale=0.6,every node/.style={scale=0.6}]
		\tikzstyle{txt}=[circle,fill=white,draw=white,inner sep=0pt]
		\tikzstyle{ndeg}=[circle,fill=blue,draw=black,inner sep=2.5pt]
		\tikzstyle{ndeo}=[circle,fill=orange,draw=black,inner sep=2.5pt]
		\tikzstyle{dot}=[circle,fill=white,draw=black,inner sep=1.5pt]
		
		\foreach \j/\jj in {0/0,1.5/1,4.5/2,8/3,9.5/4,12.5/5} {
			\foreach \i in {0,...,2}
			\node[ndeg] (a\jj\i) at (\j,\i) {};
			\foreach \i/\ii in {0/1,1/2}
			\draw[thick,arrows = {-Stealth[length=3pt]}] (a\jj\i) to (a\jj\ii);
		}
		
		\foreach \j in {1,2,3,17,18,19}
		\node[dot] at (2+\j/2,1) {};
		
		\node at (6.25,1) {\scalebox{4}{$\leadsto$}};
		
		\foreach \j/\jj in {3/8.5,4/10,5/13} {
			\node[ndeg, fill=black] (a\j4) at (\jj,1.5) {};
			\draw[thick,arrows = {-Stealth[length=3pt]}] (a\j0) to[bend right=50] (a\j4);
			\draw[thick,arrows = {-Stealth[length=3pt]}] (a\j0) to[bend left=45] (a\j2);
			\draw[thick,arrows = {-Stealth[length=3pt]}] (a\j1) to (a\j4);
			\draw[thick,arrows = {-Stealth[length=3pt]}] (a\j4) to (a\j2);
		}
		
		%
	\end{tikzpicture}
	\caption{
		An illustration of the transformation done to the food web to prove~\Cref{cor:h-cluster}.
		Black vertices are new.
	}
	\label{fig:cluster}
\end{figure}
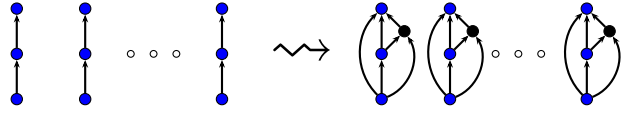%
\fi

\begin{corollary}
	\label{cor:h-cluster}
	\hPDD is \NP-hard, even if the food web is a cluster graph and each connected component contains four taxa.
\end{corollary}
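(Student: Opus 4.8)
The plan is to reduce from \hPDD restricted to instances whose food web is a disjoint union of directed $3$-vertex paths, which is \NP-hard by \Cref{cor:Faller}, following the construction sketched above (see \Cref{fig:cluster}). Let $\Instance=(\Tree,\Food,k,D)$ be such an instance with $q$ components, and for $i\in[q]$ let its $i$-th component consist of $y_{i,0}\to y_{i,1}\to y_{i,2}$.

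To each component I add a fresh helper taxon $z_i$ and all missing edges so that $\{y_{i,0},y_{i,1},z_i,y_{i,2}\}$ becomes a clique with acyclic topological order $y_{i,0},y_{i,1},z_i,y_{i,2}$; concretely I add $y_{i,0}z_i$, $y_{i,1}z_i$, $z_iy_{i,2}$, and $y_{i,0}y_{i,2}$. The in-degrees become $0,1,2,3$, so under the \hPDD rule $y_{i,1}$ still needs its single prey $y_{i,0}$, the helper $z_i$ (in-degree $2$) needs one of $\{y_{i,0},y_{i,1}\}$, and $y_{i,2}$ (in-degree $3$) needs two of $\{y_{i,0},y_{i,1},z_i\}$. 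I set $M:=n+1$, multiply every edge-weight of $\Tree$ by $M$, attach each $z_i$ as a new child of the root with edge-weight $1$, and put $k':=k$ and $D':=D\cdot M$. Thus each original taxon contributes at least $M$ to the diversity---its unique parent edge has integer weight $\geq 1$, hence $\geq M$ after scaling---while the $q<M$ helpers together contribute less than $M$.

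For the forward direction, a \gviable set $S$ of the path instance restricts on each component to a prefix $\{y_{i,0},\dots,y_{i,j}\}$, because along a path each saved non-source forces its predecessor. Every such prefix is \gviable in the corresponding clique---in particular $\{y_{i,0},y_{i,1},y_{i,2}\}$ supplies $y_{i,2}$ its two prey $y_{i,0},y_{i,1}$---so $S$ is \gviable in $\Food'$, has size $\le k'=k$, and satisfies $\PDsub{\Tree'}(S)=M\cdot\PD(S)\ge MD=D'$.

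The backward direction is where the only real obstacle appears: the helper lets $y_{i,2}$ be saved via $\{y_{i,0},z_i\}$ \emph{without} $y_{i,1}$, which a path solution would forbid. I remove these shortcuts by a swap. Let $S'$ solve $\Instance'$. Whenever $y_{i,2}\in S'$ but $y_{i,1}\notin S'$, the two prey required by $y_{i,2}$ must be $\{y_{i,0},z_i\}$, so I replace $z_i$ by $y_{i,1}$; this keeps the set \gviable and of the same size, and changes the diversity by at least $M-1>0$, since the newly added leaf $y_{i,1}$ contributes its unique parent edge of weight $\ge M$ while only the weight-$1$ helper edge is dropped. Let $S''$ be the set after all swaps; then $|S''|\le k$, $\PDsub{\Tree'}(S'')\ge\PDsub{\Tree'}(S')\ge D'$, and $y_{i,2}\in S''$ implies $y_{i,1}\in S''$ for every $i$. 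Deleting the fewer-than-$M$ remaining helpers yields $S:=S''\cap X$, which restricts to a prefix on each component and is therefore \gviable in $\Food$ with $|S|\le k$; its diversity decreases by less than $M$, so $\PDsub{\Tree'}(S)>D'-M=(D-1)M$, and since this equals $M\cdot\PD(S)$, integrality forces $\PD(S)\ge D$. Hence $S$ solves $\Instance$. The two delicate points are precisely this swap and the integrality rounding of the diversity; everything else follows from the fixed in-degree pattern $0,1,2,3$ of the cliques.
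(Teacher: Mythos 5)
Your reduction is exactly the one the paper sketches (insert a helper taxon between the two topmost vertices of each $3$-vertex path component to form a $4$-clique with in-degree pattern $0,1,2,3$, scale the tree weights by a large constant, and attach the helpers to the root with weight $1$), and your correctness argument --- in particular the swap replacing $z_i$ by $y_{i,1}$ to eliminate the $\{y_{i,0},z_i\}$ shortcut, and the integrality rounding at the end --- is sound and supplies details the paper leaves implicit. No gaps.
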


In the following, we show that \rwsPDD is
\ifJournal
not only polynomial-time solvable on cluster graphs, but even
\fi
\XP with respect to the distance to
cluster%
\ifJournal
\footnote{In literature, distance to cluster is called cluster vertex deletion number ($\distclust$), also.}
\fi
 and \FPT when adding~$\Wmax$ to the parameter.

\newcommand{\thmHCluster}[1]{
	\begin{theorem}[$\star$]
		{#1}
		Let~$\Instance = (\Tree,\Food,k,D)$ be an instance of \rwsPDD and~\mbox{$M \subseteq X$} be a set of size~$\distclust$ such that~$\Food-M$ is a cluster graph.
		Then, \Instance can be solved in~$\Oh((\Wmax+1)^{2\distclust} \cdot n^2 k)$ time.
	\end{theorem}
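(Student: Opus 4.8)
The plan is to follow the template of \Cref{thm:h-vc}: first reduce the instance to a bounded number of subproblems by guessing the behaviour of the deletion set $M$, and then solve each subproblem by a dynamic program that tracks, for the surviving deletion vertices, how many of their prey have already been secured. Since the phylogenetic tree is a star, the phylogenetic diversity of a set $S$ is simply $\sum_{x \in S} \w(\rho x)$, so the tree plays no structural role and the dynamic program can instead run over the cliques of $\Food - M$. Note that $M$ is not a vertex cover, so \Cref{lem:h-vc} does not apply directly; the guessing step below plays its role. I may assume $\Wmax \ge 1$, as otherwise every taxon is a source and the instance reduces to \MPD.

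First I would guess the set $A := S \cap M \subseteq M$ of deletion vertices that survive; there are $2^{\distclust}$ choices. Fixing $A$ determines, for each $a \in A$, the residual demand $d_a := \max\{0, \gamma_a - |\prey{a} \cap A|\}$ that must be met by prey taken from the cliques, and, for each clique vertex $x$, the number $p^A_x := |\prey{x} \cap A|$ of its prey lying in $A$. Because $\Food - M$ is a cluster graph, distinct cliques share no edges, so the only prey of a clique vertex $x$ are vertices of $A$ and earlier vertices of $x$'s own clique; crucially, a clique of the acyclic $\Food$ is a transitive tournament, so in the topological order $x_1,\dots,x_\ell$ of a clique the clique-internal prey of $x_j$ are exactly $x_1,\dots,x_{j-1}$.

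The core is a dynamic program that processes the cliques one after another and, inside each clique, its vertices in topological order. The state is a triple $(f, b, t)$, where $f\colon A \to [\Wmax]_0$ records, for each $a \in A$, the number of its clique-prey secured so far (capped at $d_a$), $b\le k$ is the budget spent so far, and $t$ is the number of vertices already saved in the current clique; the stored value is the maximal diversity collected so far. When deciding whether to save a clique vertex $x_j$, I check its viability via $t + p^A_{x_j} \ge \gamma_{x_j}$, since $t$ equals the number of its saved clique-internal prey; saving it increases $b$ and $t$ by one, adds $\w(\rho x_j)$ to the value, and increments $f_a$ for every $a \in A$ with $x_j \in \prey{a}$. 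The counter $t$ is reset to $0$ at each clique boundary. After all cliques are processed I accept if some state satisfies $b + |A| \le k$, $f_a = d_a$ for all $a \in A$, and the total diversity plus $\sum_{a \in A} \w(\rho a)$ reaches $D$. Correctness rests on the fact that being \gviable is a static (non-temporal) condition: each saved clique vertex has its constraint checked with full knowledge of its within-clique and $A$-prey, while each $a \in A$ has its constraint checked at the end with full knowledge of all its secured prey; acyclicity of $\Food$ rules out any circular dependency, so a state surviving all checks certifies that the selection is \gviable.

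For the running time, the state $f$ ranges over $(\Wmax+1)^{|A|}$ values, the budget $b$ over $k+1$ values, and for a vertex in a clique of size $|K|$ the counter $t$ over at most $|K|+1$ values; summing $\sum_K |K|(|K|+1) = \Oh(n^2)$ over all cliques (using $\sum_K |K| \le n$) gives $\Oh((\Wmax+1)^{|A|}\, n^2 k)$ per guess of $A$. Multiplying by the $2^{\distclust} \le (\Wmax+1)^{\distclust}$ guesses and using $|A|\le \distclust$ yields the claimed $\Oh((\Wmax+1)^{2\distclust}\, n^2 k)$ bound. I expect the main obstacle to be the mutual dependency between the deletion vertices and the cliques---surviving vertices of $A$ may require prey from the cliques while clique vertices may require prey from $A$---and arguing that tracking only the aggregate contribution vector $f$, rather than which individual clique-prey are chosen, together with the transitive-tournament structure of each clique, suffices to certify that the whole selection is \gviable.
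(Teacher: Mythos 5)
Your proposal is correct and follows essentially the same route as the paper's proof: guess the surviving subset $A\subseteq M$, reduce each $a\in A$ to a residual prey demand, use the fact that each clique of the acyclic food web is a transitive tournament so that viability of a clique vertex depends only on how many earlier clique vertices are saved plus its prey in $A$, and run a dynamic program over the cliques whose state records a capped prey-count vector for $A$, the budget, and the within-clique count. The only differences are organizational (the paper uses a two-level table, one per clique combined by a convolution over $(\ell',f')$, where you keep a single running counter that resets at clique boundaries) together with your slightly more careful bookkeeping of the prey that $A$-vertices have inside $A$ and of the final budget check $b+|A|\le k$.
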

}
\thmHCluster{\label{thm:h-cluster}}
\newcommand{\ProofThmHCluster}[1]{
	\begin{proof}
		\proofpara{Algorithm}
		Iterate over the subsets~$A$ of~$M$.
		We want taxa in~$A$ to survive and~$A' := M\setminus A$ to die out.
		Let~$A$ be fixed for the rest of the algorithm.
		For each~$x\in X$, compute~$x^{(A)} := \max\{0;\gamma_x - |\prey{x}\cap A|\}$.
		The number $x^{(A)}$ indicates, assuming that~$A$ is saved, how many prey of~$x$ in~$X\setminus M$ would need to be saved before~$x$ can be saved.
		Let~$C_1,\dots,C_t$ be the connected components in~$\Food - M$ and let~$x_{i,1},\dots,x_{i,|C_i|}$ be a topological order of~$C_i$, for each~$i\in [t]$.

		We define a dynamic programming algorithm with tables~$\DP$ and~$\DP_{(i)}$.
		For~$X' \subseteq X\setminus M$, $\ell \in [k]_0$, and a function~$f: A \to \mathbb{N}_0$,
		we define~$\SSS_{X',\ell,f}$ to be the family of sets~$S \subseteq X'$ such that~$|S|=\ell$, $a\in A$ has~$f(a)$ prey in~$S$, and~$x\in S$ has at least~$x^{(A)}$ prey in~$S$.
		In~$\DP[i,\ell,f]$, we store~$\max_{S \in \SSS_{X',\ell,f}} \PD(S)$, where~$X'$ is~$C_1 \cup \dots \cup C_i$, for~\mbox{$i\in [t]$}.
		In~$\DP_{(i)}[j,\ell,f]$, we store~$\max_{S \in \SSS_{X',\ell,f}} \PD(S)$, where~$X'$ is~$\{x_{i,1},\dots,x_{i,j}\}$, for~$i\in [t]$,~$j\in [|C_i|]$.
		Let~$\rho$ be the root of~\Tree.
		
		We define the function~$f_x$ as~$f_x(a) = \delta_{x \in \prey{a}}$ for each~$a\in A$.
		We indicate first how to compute~$\DP_{(i)}[j,\ell,f]$.
		We store~0 in~$\DP_{(i)}[1,0,f_0]$, where~$f_0$ maps all values to~0.
		As a base case, let~$\DP_{(i)}[1,\ell,f]$ store~$\w(\rho x_{i,1})$ if~$\ell=1$, $f = f_{x_{i,1}}$, and~$x_{i,1}^{(A)} \le 0$.
		Otherwise, store~$-\infty$.
		
		For~$j\in [|C_i|-1]$, we set $\DP_{(i)}[j+1,\ell,f]$ to~$\DP_{(i)}[j,\ell,f]$, or if $x_{i,j+1}^{(A)} \le \ell-1$, then to the maximum of~$\DP_{(i)}[j,\ell,f]$ and~$\DP_{(i)}[j,\ell-1,f - f_{x_{i,j+1}}] + \w(\rho x_{i,j+1})$.
		
		We set~$\DP[1,\ell,f]$ to~$\DP_{(1)}[|C_1|,\ell,f]$. For~$i\in [t-1]$, we use the recurrence
		\begin{eqnarray}
			\label{eqn:h-cluster}
			\DP[i+1,\ell,f] = \max_{\ell' \in [\ell], f' \le f} \{
			\DP[i,\ell',f];
			\DP_{(i+1)}[|C_{i+1}|,\ell - \ell',f - f']
			\}.
		\end{eqnarray}
		
		We return \yes, if~$\DP[t,\ell,f] \ge D - \PD(A)$ for some~$\ell\in [k]_0$ and some function~$f$ with~$f(a) \ge \gamma_a$ for each~$a\in A$.
		Otherwise, we continue with the next set~$A\subseteq M$.
		After the iteration over the subsets of~$M$, return \no.

		\proofpara{Correctness}
		We prove that~$\DP_{(i)}[j+1,\ell,f]$ for~$i\in [t]$,~$j\in [|C_i|-1]$ stores the right value, and omit the easier parts of the proof.
		Let~$S$ be a set of~$\SSS_{X_{j+1},\ell,f}$, where~$X_{j+1} := \{x_{i,1},\dots,x_{i,j+1}\}$.
		If~$x_{i,j+1} \not\in S$ then~$S \in \SSS_{X_{j},\ell,f}$.
		Otherwise, if~$x_{i,j+1} \in S$ then, by definition, $S$ contains at least $x_{i,j+1}^{(A)}$ prey of~$x_{i,j+1}$.
		Thus,~$x_{i,j+1}^{(A)} \le |S\setminus \{x_{i,j+1}\}| = \ell-1$.
		Then,~$S\setminus\{x_{i,j+1}\}$ is in $\SSS_{X_{j},\ell-1,f-f_{x_{i,j+1}}}$ and we conclude that $\DP_{(i)}[j+1,\ell,f] \le \max\{ \DP_{(i)}[j,\ell,f]; \DP_{(i)}[j,\ell-1,f - f_{x_{i,j+1}}] + \w(\rho x_{i,j+1}) \}$.
		
		Conversely, if~$S$ is in~$\SSS_{X_{j},\ell,f}$ then~$S$ is also in~$\SSS_{X_{j+1},\ell,f}$.
		Further, if~$S$ is in~$\SSS_{X_{j},\ell-1,f-f_{x_{i,j+1}}}$ and~$x_{i,j+1}^{(A)} \le \ell-1$, then $S\cup \{x_{i,j+1}\}$ is in~$\SSS_{X_{j+1},\ell,f}$.
		We conclude that~$\DP_{(i)}[j+1,\ell,f]$ stores the correct value.

		\proofpara{Running Time}
		The iteration over~$A$ takes~$\Oh(2^{\distclust})$ time.
		We note that it is sufficient to have~$f: A \to [\Wmax]_0$, where higher numbers map to~$\Wmax$.
		All tables together have~$\Oh((\Wmax+1)^{\distclust} \cdot nk)$ entries.
		
		Value can be computed with \Recc{eqn:h-cluster} in time~$\Oh((\Wmax+1)^{2\distclust} \cdot k^2)$.
		Any other step can be computed in time~$\Oh(n)$, such that the overall running time is~$\Oh((\Wmax+1)^{2\distclust} \cdot n^2 k)$.
	\end{proof}
}

\subsection{Treewidth}
\label{sec:h-treewidth}
Finally, we show that \rwsPDD is \XP with respect to the treewidth~$\tw$ of the food
web~\Food and \FPT when adding~$\Wmax$ to the parameter.
\ifJournal
Consequently,~\rwsPDD can be solved in polynomial time if the food web has a constant treewidth.
Common definitions of tree decompositions are found in~\cite{treedecom,cygan}.
\fi

\newcommand{\thmHTW}[1]{
	\begin{theorem}[$\star$]
		#1
		Given a nice tree-decomposition~$T$ of~$\Food=(V_\Food,E_\Food)$ with tree\-width~$\tw$,
		\rwsPDD can be solved in~$\Oh(\Wmax^{2\tw} \tw\cdot nk^2)$ time.
	\end{theorem}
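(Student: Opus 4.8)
The plan is to exploit the fact that in \rwsPDD the phylogenetic tree is a star, so that $\PD(S)=\sum_{x\in S}\w(\rho x)$ is simply a vertex weight $w(x):=\w(\rho x)$ on the food web. Since $\gamma$ is restricted, requiring that $S$ be \gviable means exactly that every non-source $v\in S$ has at least $\gamma_v$ prey in $S$. Thus the task becomes: select $S\subseteq X$ with $|S|\le k$ maximizing $\sum_{x\in S}w(x)$ subject to $|\prey{v}\cap S|\ge\gamma_v$ for every non-source $v\in S$. I would solve this by a dynamic programming algorithm along the given nice tree-decomposition $T$ of $\Food$, processing the bags bottom-up, using the directions (prey versus predator) of the food web even though $T$ decomposes the underlying undirected graph.

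For a node of $T$ with bag $B$, I would store entries indexed by a selection pattern that assigns to each $v\in B$ either the symbol $\bot$ (meaning $v\notin S$) or a count $c_v\in\{0,\dots,\gamma_v\}$ (meaning $v\in S$, and exactly $c_v$ of the prey of $v$ that \emph{have already been forgotten} lie in $S$, capped at $\gamma_v\le\Wmax$), together with the number $\ell\in[k]_0$ of vertices already placed into $S$ in the processed part of the graph. The value of the entry is the maximum weight $\sum w(x)$ over all partial solutions consistent with this state. The crucial invariant is that $c_v$ tallies only the \emph{forgotten} selected prey of $v$, so that the prey count of a vertex is completed precisely at the moment the vertex is forgotten.

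The node operations are then as follows. At a leaf I initialize the trivial state. At an introduce node for $v$ I branch on whether $v\in S$ (updating $\ell$ and the weight, and setting $c_v:=0$), which is correct because a freshly introduced vertex can have no forgotten prey yet. At a join node whose children share bag $B$ I require the selection bits to agree, add the counts as $c_v:=c_v^{L}+c_v^{R}$ (capped), and combine sizes via $\ell=\ell_L+\ell_R-|S\cap B|$. The heart of the algorithm is the forget node for a vertex $v$: here I first \emph{finalize} $v$ by adding to $c_v$ the number of selected prey of $v$ still present in the bag, which—because every neighbour of $v$ shares a bag with $v$ and is therefore already introduced—yields the true total $|\prey{v}\cap S|$; I discard the state unless this total is at least $\gamma_v$ (for a non-source selected $v$); and simultaneously, for each predator $w\in\predators{v}$ remaining in the bag with $w,v\in S$, I increment $c_w$, recording $v$ as a newly forgotten selected prey of $w$. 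The instance is a \yes-instance iff the root entry (empty bag) attains value at least $D$ for some $\ell\le k$.

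The main obstacle, and the point requiring the most care, is to argue that each edge of the food web contributes to a prey count exactly once and that every count is correct at the instant of verification. This rests on the standard fact that the bags containing a fixed vertex form a connected subtree, so a vertex forgotten in one subtree never reappears elsewhere. This guarantees both that at a join the prey forgotten in the two branches are disjoint, so summing $c_v^{L}+c_v^{R}$ neither double-counts nor omits an edge, and that when $v$ is forgotten all its prey are either already recorded in $c_v$ or still in the bag, so the finalization step computes $|\prey{v}\cap S|$ exactly. Granting this, correctness follows by induction over $T$. For the running time, there are at most $\gamma_v+1\le\Wmax+1$ count values per selected bag vertex, hence $\Oh(\Wmax^{\tw})$ selection patterns per bag and $\Oh(\Wmax^{\tw}\cdot k)$ entries per node; the join node dominates, as for each combined entry it splits every $c_v$ (a factor $\prod_v(\gamma_v+1)=\Oh(\Wmax^{\tw})$) and the size $\ell$ (a factor $k$), costing $\Oh(\Wmax^{2\tw}\cdot k^2)$ per node. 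With $\Oh(\tw\cdot n)$ nodes this yields the claimed $\Oh(\Wmax^{2\tw}\,\tw\cdot n k^2)$ bound. Since one may assume $\Wmax\le k\le n$, this is in particular an \XP-algorithm in $\tw$ and an \FPT-algorithm in $\tw+\Wmax$.
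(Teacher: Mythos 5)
Your proposal is correct and follows essentially the same approach as the paper: a dynamic program over the nice tree-decomposition whose state records, for each selected bag vertex, a prey count capped at $\Wmax$, with the join node dominating the running time at $\Oh(\Wmax^{2\tw}k^2)$ per node. The only difference is a bookkeeping convention—you tally only already-forgotten prey and finalize the viability check at the forget node, whereas the paper's $f(x)$ counts all prey in the partial solution and therefore adjusts the counts at introduce and join nodes instead—but this does not change the argument or the bound.
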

}
\thmHTW{\label{thm:h-tw}}
\newcommand{\ProofThmHTW}{
	\begin{proof}
		Let $\Instance = (\Tree, \Food, k, D)$ be an instance of \rwsPDD.
		We define a dynamic programming algorithm with table~$\DP$ over the given tree-decomposition~$T$ of~$\Food=(V_\Food,E_\Food)$.
		
		For a node~$t\in T$, let $Q_t$ be the bag associated with~$t$ and let $V_t$ be the union of bags in the subtree of~$T$ rooted a~$t$.
		
		\proofpara{Table Definition}
		Given a bag~$t$, a set~$A\subseteq Q_t$, a function~$f: Q_t \to \mathbb{N}_0$, and an integer~$s$,
		a set~$Y\subseteq V_t$ is \emph{$(t,A,f,s)$-feasible}, if
		\begin{enumerate}
			\item[(T1)]\label{it:h-color}$A$ is the subset of~$Y$ in~$Q_t$; formally~$A = Y \cap Q_t$. 
			\item[(T2)]\label{it:h-prey}Each taxon $x\in Y\cap Q_t$ has~$f(x)$ prey in~$Y$;\\
			formally~$f(x) = |\prey{x} \cap Y|$ for all~$x \in Y\cap Q_t$.
			\item[(T2)]\label{it:h-safeness}Each taxon $x\in Y\setminus Q_t$ has at least~$\gamma_x$ prey in~$Y$;\\
			formally~$|\prey{x} \cap Y| \ge \gamma_x$ for all~$x \in Y\setminus Q_t$.
			\item[(T4)]\label{it:h-size}The size of $Y$ is $s$; formally~$s=|Y|$.
		\end{enumerate}
		
		Let~$\SSS_{t,A,f,s}$ be the set of $(t,A,f,s)$-feasible sets.
		In table entry~$\DP[t,A,f,s]$,
		store~$\max_{Y\in \SSS_{t,A,f,s}} \PD(Y)$.
		Let $r$ be the root of the tree-decomposition $T$.
		Then, $\DP[r,\emptyset,f_\emptyset,k]$ stores the maximum phylogenetic diversity of a \gviable, $k$-sized taxa set.
		Here,~$f_\emptyset$ is the ``function with an empty domain''.
		So, return~\yes if $\DP[r,\emptyset,f_\emptyset,k]\ge D$, and \no otherwise.

		\proofpara{Leaf Node}
		For a leaf~$t$ of~$T$ the bags~$Q_t$ and $V_t$ are empty. We store
		\begin{eqnarray} \label{tw:h-leaf}
			\DP[t,\emptyset,f_\emptyset,0] &=& 0.
		\end{eqnarray}
		For all other values, we store $\DP[t,R,G,B,s] = -\infty$.
		
		\Recc{tw:h-leaf} is correct by definition.

		\proofpara{Introduce Node}
		Let~$t$ be an \emph{introduce node}, that is,~$t$ has a single child~$t'$ with~$Q_t = Q_{t'} \cup \{v\}$.
		
		If~$v\not\in A$, store~$\DP[t,A,f,s] = \DP[t',A,f_{|Q_{t'}},s]$.
		
		If~$v\in A$ and $v$ has exactly $f(v)$ prey in~$A$,
		store~$\DP[t,A,f,s] = \DP[t',A\setminus \{v\},f',s] + \PD(v)$.
		Here,~$f'$ is defined on predators~$w\in \predators{v} \cap A$ of~$v$ as~$f'(w) := f(w)-1$, and~$f'(u) = f(u)$ for each~$u\in Q_t \setminus (\predators{v} \cap A)$.
		
		Otherwise, if~$v\in A$ and $|\prey{v}\cap A| \ne f(v)$, store~$\DP[t,A,f,s] = -\infty$.

		If we want $v$ to be saved, $f$ needs to store the number of prey that~$v$ has in~$A$.
		Further, $v$ counts into the number of prey for each predator of~$v$ in~$A$.

		\proofpara{Forget Node}
		Let~$t$ be a \emph{forget node}, that is,~$t$ has a single child~$t'$ and~$Q_t = Q_{t'} \setminus \{v\}$.
		We store
		\begin{eqnarray} \label{tw:h-forget}
			\DP[t,A,f,s] = \max &\{&
			\DP[t',A,f^{(0)},s];\\
			&&
			\max_{i \in \{\Wmax,\dots,|\prey{v}|\}}
			\DP[t',A\cup\{v\},f^{(i)},s]
			\}.
		\end{eqnarray}
		Here,~$f^{(i)}$ is the function~$A\cup\{v\} \to \mathbb{N}_0$ with~$f^{(i)}_{|A} = f$ and~$f^{(i)}(v) = i$.

		If~$v$ is being saved, by definition, we need to save at least~$\gamma_v$ of the prey of~$v$.
		Define sets~$\SSS_{t,A,f,s}^{v,i} := \{ Y\in \SSS_{t,A,f,s} \mid v\in Y, f(v) = i \}$ and~$\SSS_{t,A,f,s}^{-v} := \{ Y\in \SSS_{t,A,f,s} \mid v\not\in Y \}$.
		The correctness of \Recc{tw:h-forget} follows from the observation that~$\SSS_{t,A,f,s}^{v,i}$ for~$i\in \{\gamma_v,\dots,|\prey{v}|\}$ and~$\SSS_{t,A,f,s}^{v}$ are a disjoint union of~$\SSS_{t,A,f,s}$.

		\proofpara{Join Node}
		Let~$t$ be a \emph{join node}, that is,~$t$ has two children~$t_1$ and~$t_2$ with~$Q_t = Q_{t_1} = Q_{t_2}$.
		We store
		\begin{eqnarray} \label{tw:h-join}
			&& \DP[t,A,f,s]\\
			\nonumber
			&=& \max_{f_1,f_2,s' \in [s-|A|]_0} \DP[t_1,A,f_1,|A|+s'] + \DP[t_2,A,f_2,|A|+s-s'] - \PD(A).
		\end{eqnarray}
		Here, functions~$f_1$ and~$f_2$ hold~$f(v) = f_1(v) + f_2(v) - |\prey{v}\cap A|$ for each~$v\in Y$.
		
		The correctness of \Recc{tw:h-join} follows from the fact that there are no edges between~$V_{t_1} \setminus Q_t$ and~$V_{t_2} \setminus Q_t$.
		Because~\Tree is a star, we can simply add the phylogenetic diversities together.
		Further, $f_i$ counts the saved prey that are in~$V_{t_i}$ for~$i\in\{1,2\}$.
		Yet, prey in~$A$ is counted twice.

		\proofpara{Running Time}
		Instead of storing a subset of~$A\subseteq Q_t$ and a function~$f: Q_t \to \mathbb{N}$, we can store a function~$f: Q_t \to [\Wmax]_0 \cup \{\texttt{none}\}$, where we store~$f(v) \in \mathbb{N}$ if~$v\in A$ and~$f(v) = \texttt{none}$ if~$v \not\in A$.
		Higher values for~$f(v)$ can be mapped to~$\Wmax$.
		A tree decomposition contains $\Oh(n)$ nodes, thus the table contains $\Oh((\Wmax +1)^\tw \cdot nk)$ entries.
		Leaf, introduce, and forget nodes can be computed in time linear in~$|\prey{n}| \le n$ and~$\tw$.
		Observe that to compute the function~$f$ in a join node, it is sufficient to know~$A$,~$f_1$, and~$f_2$.
		Therefore, to compute all values of a join node, we iterate over~$s$,~$s'$,~$A$,~$f_1$, and~$f_2$ such that any join node can be computed in~$\Oh(\Wmax^{2\tw} \cdot k^2)$ time.
		Therefore, the overall running time is~$\Oh(\Wmax^{2\tw} \tw\cdot n\cdot k^2)$ time.
	\end{proof}
}

\section{Structural Parameters of the Food-Web for 1-PDD}
\label{sec:structural-f}
In this section, we analyze the complexity of \fPDD with respect to parameters that categorize the food web of an instance.
A detailed overview of these results is provided in \Cref{fig:1-results}.
It is somewhat remarkable that for all these parameterizations, \fPDD seemingly has the same tractability result as \PDD~\cite{PDD}.

\begin{figure}[t]
	\tikzstyle{para}=[rectangle,draw=black,minimum height=.8cm,fill=gray!10,rounded corners=1mm, on grid]
	
	\newcommand{\tworows}[2]{\begin{tabular}{c}{#1}\\{#2}\end{tabular}}
	\newcommand{\threerows}[3]{\begin{tabular}{c}{#1}\\{#2}\\{#3}\end{tabular}}
	\newcommand{\distto}[1]{\tworows{Distance to}{#1}}
	\newcommand{\disttoc}[2]{\threerows{Distance to}{#1}{#2}}
	
	\DeclareRobustCommand{\tikzdot}[1]{\tikz[baseline=-0.6ex]{\node[draw,fill=#1,inner sep=2pt,circle] at (0,0) {};}}
	\DeclareRobustCommand{\tikzdottc}[2]{\tikz[baseline=-0.6ex]{\node[draw,diagonal fill={#1}{#2},inner sep=2pt,circle] at (0,0) {};}}
	
	\tikzset{
		diagonal fill/.style 2 args={fill=#2, path picture={
				\fill[#1, sharp corners] (path picture bounding box.south west) -|
				(path picture bounding box.north east) -- cycle;}},
		reversed diagonal fill/.style 2 args={fill=#2, path picture={
				\fill[#1, sharp corners] (path picture bounding box.north west) |- 
				(path picture bounding box.south east) -- cycle;}}
	}
	\centering
	\begin{tikzpicture}[node distance=2*0.45cm and 3.7*0.38cm, every node/.style={scale=0.57}]
		\linespread{1}
		\node[para,fill=g] (vc) {Minimum Vertex Cover};
		\node[para, diagonal fill=gr, xshift=38mm] (ml) [right=of vc] {Max Leaf \#};
		\node[para, xshift=-25mm,fill=g] (dc) [left=of vc] {\distto{Clique}};
		
		\node[para, diagonal fill=gr, xshift=-4mm] (dcc) [below= of dc] {\distto{Cluster}}
		edge (dc)
		edge[bend left=10] (vc);
		\node[para, fill=g,xshift=26mm] (dcl) [below= of dc] {\distto{Co-Cluster}}
		edge (dc)
		edge (vc);
		\node[para, diagonal fill=gr, xshift=8mm] (ddp) [below=of vc] {\distto{disjoint Paths}}
		edge (vc)
		edge (ml);
		\node[para,diagonal fill=gr] (fes) [below =of ml] {\tworows{Feedback}{Edge Set}}
		edge (ml);
		\node[para, xshift=2mm, diagonal fill=gr] (bw) [below right=of ml] {Bandwidth}
		edge (ml);
		\node[para, xshift=5mm, yshift=0mm,diagonal fill=gr] (td) [right=of ddp] {Treedepth}
		edge[bend right=28] (vc);
		
		\node[para, diagonal fill=gr] (fvs) [below= of ddp] {\tworows{Feedback}{Vertex Set}}
		edge (ddp)
		edge[bend right=5] (fes);
		\node[para, diagonal fill=gr] (dcw) [right= of bw] {\tworows{Directed}{Cutwidth}};
		\node[para, diagonal fill=gr] (cw) [below= of bw] {Cutwidth}
		edge (bw)
		edge (dcw);
		\node[para, xshift=-30mm, diagonal fill=gr] (pw) [below= of cw] {Pathwidth}
		edge (ddp)
		edge (td)
		edge (cw);
		\node[para, diagonal fill=gr] (sw) [below right= of cw] {Scanwidth}
		edge (dcw);
		
		\node[para, xshift=5mm, fill=r] (dbp) [below left= of fvs] {\distto{Bipartite}}
		edge (fvs);
		\node[para, yshift=0mm, diagonal fill=gr] (tw) [below=of pw] {Treewidth}
		edge (fvs)
		edge (pw)
		edge[bend right=8] (sw);

		\node[para, yshift=-20mm, diagonal fill={yellow!50}{grey2}] [below= of dc] {\tworows{{\fPDD}\;\;\;\;\;\;\;\;\;\;\;}{\;\;\;\;\;\;\;\;\;\;\fsPDD}};
	\end{tikzpicture}
	\caption{
		This figure, similar to \Cref{fig:1/2-results}, shows the complexity of \fPDD and \fsPDD with respect to the main structural parameter of the food-web.
	}
	\label{fig:1-results}
\end{figure}

\subsection{Distance to Cluster}
\label{sec:f-cluster}
In this section, we consider how difficult \fPDD is to solve when the food web almost is a cluster graph.
Recall that in a cluster graph, every connected component is a clique.
%
%
In \fPDD, every clique is essentially a path, as every vertex that appears earlier in the topological orientation has to be saved first.
Consequently, with \cite{faller}, we can conclude the following for \fPDD.

\begin{corollary}
	\label{cor:f-cluster}
	\fPDD is \NP-hard, even if the food web is a cluster graph and each connected component contains 3 taxa.
\end{corollary}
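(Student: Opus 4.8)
The plan is to reduce from the instances produced by \Cref{cor:Faller}, in which \fPDD is already \NP-hard and every connected component of the food web is a directed path on three taxa. The single idea is the one highlighted just before the statement: under the \fviable condition a clique and a path behave identically, because in both a taxon can be saved only once \emph{all} of its prey are saved, and this forces exactly the \emph{prefixes} of a topological order to be the viable subsets of a component. Concretely, given such an instance $\Instance=(\Tree,\Food,k,D)$ with components $y_{i,0}\to y_{i,1}\to y_{i,2}$, I would build $\Instance'=(\Tree,\Food',k,D)$ by leaving $\Tree$, $k$, and $D$ untouched and, within each component, adding the one missing arc $y_{i,0}y_{i,2}$. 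Each component of $\Food'$ is then a transitive tournament on three vertices, i.e.\ a clique on three taxa, so $\Food'$ is a cluster graph whose components all contain exactly three taxa.

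For correctness I would first characterize the viable subsets of a single component. In a path $y_{i,0}\to y_{i,1}\to y_{i,2}$, saving $y_{i,2}$ forces its only prey $y_{i,1}$, which in turn forces $y_{i,0}$; hence the \fviable subsets of the component are exactly the topological prefixes $\emptyset$, $\{y_{i,0}\}$, $\{y_{i,0},y_{i,1}\}$, $\{y_{i,0},y_{i,1},y_{i,2}\}$. In the clique with the same topological order, $y_{i,2}$ now has both $y_{i,0}$ and $y_{i,1}$ as prey and, since \fPDD requires all prey, again forces both; $y_{i,1}$ still forces $y_{i,0}$. Thus the \fviable subsets of the clique component are the very same prefixes. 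Since components are vertex-disjoint and share no arcs, a set $S\subseteq X$ is \fviable in $\Food$ if and only if it is \fviable in $\Food'$. Because $\Tree$ is unchanged, $\PD$ is computed identically for every such $S$, so $\Instance$ and $\Instance'$ have exactly the same solutions; in particular one is a \yes-instance if and only if the other is. As the transformation is plainly computable in polynomial time, \NP-hardness transfers.

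There is essentially no deep obstacle here; the only point that needs care is that passing to the \fPDD edge weights $\gamma(e)=1/\deg^-(v)$ changes the weights on the arcs entering $y_{i,2}$ (from one arc of weight~$1$ to two arcs of weight~$\frac{1}{2}$) when the arc $y_{i,0}y_{i,2}$ is added. I would stress that this is harmless: the \fviable condition is precisely ``all prey present'' irrespective of the individual weights, and every prefix containing $y_{i,2}$ already contained $y_{i,0}$, so the effective constraint on $y_{i,2}$ is unchanged. This is exactly the observation that in \fPDD every clique is essentially a path, and it is what makes the one-arc modification faithful.
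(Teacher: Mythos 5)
Your proposal is correct and follows essentially the same route as the paper: the paper derives the corollary directly from the hardness of \fPDD on disjoint directed paths of three taxa (\Cref{cor:Faller}) via the observation that, under the \fviable condition, a clique and a path with the same topological order admit exactly the same (prefix) viable subsets, so completing each path component to a transitive tournament is a faithful, trivial reduction. Your explicit check that the reweighting of the arcs into $y_{i,2}$ is immaterial is a welcome detail the paper leaves implicit.
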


Next, we show that \fsPDD is polynomial-time solvable when the food web is a cluster graph.
Afterward, we generalize this result and show that \fsPDD is \FPT when parameterized by the size of a given cluster vertex deletion set.

\begin{lemma}
	\label{lem:f-cluster}
	Instances of \fsPDD can be solved in~$\Oh((n+m)\cdot k^2)$ time, if the food web in the input is a cluster graph.
\end{lemma}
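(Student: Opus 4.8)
The plan is to exploit the rigid structure that the \fviable constraint forces inside each clique, together with the fact that, since \Tree is a star, $\PD(S)=\sum_{x\in S}\w(\rho x)$ is simply the total vertex weight of the chosen taxa.

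First I would fix a topological order $x_{i,1},\dots,x_{i,|C_i|}$ of each connected component (clique) $C_i$ of the cluster graph \Food. Since every component is a clique and \Food is acyclic, each pair is joined by an arc oriented forward, so $\prey{x_{i,j}}=\{x_{i,1},\dots,x_{i,j-1}\}$ and in particular $x_{i,1}$ is a source. The key observation is that the intersection of any \fviable set with a clique is a \emph{prefix} of this order: to include $x_{i,j}$ the \fviable condition demands all of its prey $x_{i,1},\dots,x_{i,j-1}$ be present, so the chosen subset of $C_i$ is downward closed and hence equals $\{x_{i,1},\dots,x_{i,j}\}$ for some $j\in[|C_i|]_0$. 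Because \Food is a disjoint union of cliques there are no arcs between components, so the prefix chosen in one clique is \fviable independently of those chosen in the others.

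This reduces the problem to a grouped (multiple-choice) knapsack: the cliques are the groups, and from group $i$ one may pick the prefix of length $j$, which costs $j$ towards the budget $k$ and contributes value $W_{i,j}:=\sum_{l=1}^{j}\w(\rho x_{i,l})$. We seek prefixes of total length at most $k$ whose total value is at least $D$. I would solve this by a dynamic program over the components: letting $C_1,\dots,C_t$ be the cliques and $\DP[i,\ell]$ the maximum value attainable using prefixes of $C_1,\dots,C_i$ of total length at most $\ell$, we set $\DP[0,\ell]=0$ and use $\DP[i,\ell]=\max_{0\le j\le\min(\ell,|C_i|)}\bigl(\DP[i-1,\ell-j]+W_{i,j}\bigr)$. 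The algorithm returns \yes iff $\DP[t,k]\ge D$ and \no otherwise.

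For correctness, the prefix characterisation shows that every \fviable set of size at most $k$ corresponds to a choice of one prefix per clique of the same total size and with value exactly $\PD(S)$ (using additivity of \PD over leaves of a star), and conversely every such prefix choice yields a \fviable set; the monotone prefix sums $W_{i,j}$ then make the recurrence a faithful transcription of the grouped knapsack. For the running time, a topological order and the prefix sums $W_{i,j}$ are computed in $\Oh(n+m)$ time, the table has $\Oh(tk)$ entries, and each entry is filled by trying at most $k$ prefix lengths, giving the claimed $\Oh((n+m)\cdot k^2)$ bound (in fact $\Oh(nk+n+m)$, since $\sum_i|C_i|=n$). I expect the only genuinely delicate point to be stating the prefix characterisation cleanly — in particular noting that $x_{i,1}$ is a source so that the empty and full prefixes are both \fviable, and that the absence of inter-component arcs is what lets the contributions and the viability be handled clique-by-clique; everything after that is routine knapsack dynamic programming.
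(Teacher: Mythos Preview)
Your proposal is correct and follows essentially the same approach as the paper: both observe that the \fviable constraint forces the solution restricted to each clique to be a prefix of its topological order, reduce to a grouped knapsack over the cliques, and solve it with the same dynamic program over components. Your write-up is in fact more explicit about the prefix characterisation and its justification than the paper's own proof, and your observation that the true running time is $\Oh(nk+n+m)$ is a valid sharpening of the stated bound.
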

\begin{proof}
	%
	\proofpara{Algorithm}
	Let an instance~$\Instance := (\Tree, \Food, k, D)$ of \fsPDD be given, where~\Food is a cluster graph.
	Let~$C_1,\dots,C_q$ be the connected components of~\Food.
	For each~\mbox{$i\in [q]$}, the topological order of~$C_i$ directly indicates which set of taxa~$S_{i,j}$ will be saved if~$j\in [k]$ taxa can be saved from~$C_i$.
	Define~$\w_{i,j} := \PD(S_{i,j})$.
	
	Define a dynamic programming algorithm with table~$\DP$.
	In~$\DP[i,k']$, store the maximum phylogenetic diversity when~$k'$ taxa can be saved from~$C_1,\dots,C_i$.
	
	As a base case, for each~$j\in [\min\{k,|C_1|\}]_0$, store~$\DP[1,j] = \w_{1,j}$.
	
	To compute further values, we use the recurrence
	\begin{equation}
		\label{eqn:recurrence-f-cluster}
		\DP[i+1,j] :=
		\max_{\ell \in [j]_0}
		\DP[i,\ell]
		+
		\w_{i+1,j-\ell}.
	\end{equation}
	
	Return \yes if~$\DP[q,k] \ge D$.
	Otherwise, return \no.
	
	\proofpara{Correctness}
	Since the phylogenetic tree is a star, the only dependence of the taxa is given by the food web.
	Therefore, the sets~$S_{i,j}$ are well-defined.
	The rest of the proof is straight-forward.
	
	\proofpara{Running Time}
	By iterating over the edges, we can compute the in-degree of every vertex, which defines the topological order.
	Then, all values of~$\w_{i,j}$ can be computed in~$\Oh(n)$ time.
	The table $\DP$ has~$\Oh(q\cdot k)$ entries which can be computed in~$\Oh(k)$ time, each.
	Thus, the overall running time is~$\Oh(m\cdot k^2)$.
\end{proof}

\begin{theorem}
	\label{thm:f-cluster}
	Instances~$\Instance := (\Tree, \Food, k, D)$ of \fsPDD can be solved in~$\Oh(2^{|M|} \cdot (n+m)\cdot k^2))$ time if a set~$M\subseteq X$ is given such that~$\Food - M$ is a cluster graph.
\end{theorem}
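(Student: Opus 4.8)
The plan is to follow the subset-enumeration strategy of \Cref{thm:h-cluster}, but to exploit the much simpler viability structure of \fPDD in order to avoid the $(\Wmax+1)^{\distclust}$ factor and instead obtain a dependence of only $2^{|M|}$ on the parameter. I would iterate over all subsets $A\subseteq M$, interpreting $A$ as exactly the taxa of $M$ that are kept in the solution, while $M\setminus A$ is discarded. For a fixed $A$ there is first a consistency check: since in \fPDD every kept taxon needs \emph{all} of its prey, a choice of $A$ can be extended to a solution only if no $a\in A$ has a prey in $M\setminus A$; otherwise this $A$ is skipped. The PD contributed by $A$ is $\PD(A)=\sum_{a\in A}\w(\rho a)$ because \Tree is a star, and $A$ already consumes $|A|$ of the budget $k$.

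The key structural observation, which makes the clusters tractable, is that after fixing $A$ the savable subsets inside each clique are precisely the prefixes of its topological order. Let $C_i$ have topological order $x_{i,1},\dots,x_{i,|C_i|}$. Because within a clique every earlier taxon is prey of every later one, keeping $x_{i,j}$ forces all of $x_{i,1},\dots,x_{i,j-1}$ to be kept, so any \fviable selection inside $C_i$ is a prefix. Two numbers summarise the effect of $A$ on $C_i$. A taxon $x_{i,j}$ is \emph{blocked} if it has a prey in $M\setminus A$ (such a prey is never kept), and the prefix structure makes the keepable taxa exactly the prefix up to the first blocked one; call its length $d_i$. A taxon is \emph{mandatory} if it is a prey of some $a\in A$, and since keeping $x_{i,j}$ forces the whole prefix $x_{i,1},\dots,x_{i,j}$, the mandatory requirement on $C_i$ amounts to keeping at least the prefix of length $c_i$, where $c_i$ is the largest index of a taxon of $C_i$ that is prey of some $a\in A$ (and $c_i=0$ if none). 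If $c_i>d_i$ for some $i$, then a mandatory taxon cannot be kept and $A$ is infeasible; this single comparison also absorbs the implicit requirement that the $M$-prey of every mandatory cluster taxon lie in $A$.

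Given the pairs $(c_i,d_i)$, I would then run a knapsack-style dynamic program over the cliques that is the constrained analogue of \Recc{eqn:recurrence-f-cluster} from \Cref{lem:f-cluster}: with $\w_{i,j}:=\PD(\{x_{i,1},\dots,x_{i,j}\})$ the value of the length-$j$ prefix, the entry $\DP[i,k']$ stores the maximum PD achievable from $C_1,\dots,C_i$ using at most $k'$ taxa, where in the recurrence the number $\ell$ taken from $C_{i+1}$ is restricted to $c_{i+1}\le \ell\le d_{i+1}$. (Equivalently one may pre-pay the mandatory prefixes, reducing the remaining budget to $k-|A|-\sum_i c_i$ and shifting each clique's profile by $\w_{i,c_i}$, and then invoke \Cref{lem:f-cluster} verbatim.) The instance is a \yes-instance iff for some $A$ we have $\PD(A)+\DP[q,\,k-|A|]\ge D$. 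Correctness reduces to the two claims that (i) after fixing $A$ the keepable and mandatory sets of each clique are exactly these prefixes, and (ii) a union of one prefix per clique together with $A$ is \fviable in \Food, and conversely every solution decomposes this way; both follow from the downward-closure characterisation of viability in \fPDD together with the fact that the only cross-clique dependencies pass through $M$, which is fixed once $A$ is chosen.

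For the running time, each of the $2^{|M|}$ subsets $A$ needs $\Oh(n+m)$ time to test closure within $M$ and to compute all the $c_i$, $d_i$, and $\w_{i,j}$, and the constrained knapsack over the $q\le n$ cliques has $\Oh(nk)$ entries each computable in $\Oh(k)$ time, i.e.\ $\Oh((n+m)\cdot k^2)$ per subset, for a total of $\Oh(2^{|M|}\cdot(n+m)\cdot k^2)$ as claimed. The main obstacle I anticipate is the bookkeeping in the correctness argument: one must verify carefully that the forbidden/mandatory propagation through $M$---in particular the requirement that every $M$-prey of a mandatory cluster taxon belong to $A$---is faithfully captured by the single test $c_i\le d_i$, and that no further cross-clique interaction survives once $A$ is fixed.
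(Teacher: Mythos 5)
Your proposal is correct and is essentially the paper's proof: the paper likewise iterates over the guess $Y=S\cap M$, propagates the forced taxa (the transitive prey $R_Y$ of $Y$, which meet each clique in exactly your mandatory prefix of length $c_i$) and the forbidden taxa (the transitive predators $Q_Y$ of $M\setminus Y$, which meet each clique in the suffix beyond your $d_i$), checks $R_Y\cap Q_Y=\emptyset$ in place of your tests, and then invokes the prefix knapsack of \Cref{lem:f-cluster} on the residual cluster graph with budget $k-|R_Y|$ and threshold $D-\PD(R_Y)$ --- precisely your ``pre-pay the mandatory prefixes'' variant. The only difference is presentational: the paper phrases the propagation via global reachability sets and \Cref{lem:sub-solutions} rather than your per-clique indices $c_i,d_i$.
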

\begin{proof}
	%
	\proofpara{Algorithm}
	Iterate over subsets~$Y\subseteq M$.
	We want that~$Y$ are the taxa in~$M$ that are being saved and~$M\setminus Y$ should die out.
	Let~$R_Y$ be the set of taxa which can reach~$Y$ in \Food and let~$Q_Y$ be the set of taxa which can be reached from~$M\setminus Y$ in \Food.
	If~$R_Y \cap Q_Y \ne \emptyset$, then continue with the next set~$Y$.
	Otherwise, compute whether~$\Instance' := (\Tree - (R_Y\cup Q_Y), \Food - (R_Y \cup Q_Y), k-|R_Y|, D-\PD(R_Y))$ is a \yes instance of \fsPDD with \Cref{lem:f-cluster} and return \yes if so.
	Otherwise, continue with the next set~$Y$.
	Return \no after the iteration.
	
	\proofpara{Correctness}
	Let~$S$ be a solution of~$\Instance$ and define~$Y := S\cap M$.
	By \Cref{lem:sub-solutions}, $R_Y \subseteq S$ and~$Q_Y \cap S = \emptyset$.
	We conclude that~$S\setminus R_Y$ is a solution of~$\Instance'$.
	As~$Y$ is considered in the iteration, the algorithm returns \yes.
	
	Conversely, assume that the algorithm returns \yes on~$Y$.
	Because~$Y$ is to be saved, each taxon which can reach~$Y$ needs to be saved.
	Similarly, each taxon that can be reached from~$M\setminus Y$ will go extinct when~$M\setminus Y$ does.
	Assume now that~$S$ is a solution for $\Instance'$.
	By \Cref{lem:sub-solutions}, $S \cup R_Y$ is valid in~\Food.
	Further, $|S \cup R_Y| = |S| + |R_Y| \le k$ and~$\PD(S \cup R_Y) = \PD(S) + \PD(R_Y) \ge D$.
	
	\proofpara{Running Time}
	For a given~$Y$, the sets~$R$ and~$Q$ can be computed in~$\Oh(n+m)$ time.
	By \Cref{lem:f-cluster}, we can compute a solution for~$\Instance''$ in~$\Oh((n+m)\cdot k^2)$ time.
\end{proof}

\subsection{Distance to Co-Cluster}
\label{sec:f-co-cluster}
Now, we show that \fPDD is \FPT with respect to the distance to co-cluster.
Recall, a co-cluster graph is the complement of a cluster graph.
Similar as in the last section, we show that \fPDD is polynomial-time solvable on co-clusters,~first.

\newcommand{\lemfCC}[1]{
	\begin{lemma}[$\star$]
		#1
		Instances of \fPDD can be solved in~$\Oh(nk\cdot (n+m))$ time, if the food web in the input is a co-cluster graph.
	\end{lemma}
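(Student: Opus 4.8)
The plan is to exploit the strong order structure that a complete multipartite food web forces on the prey relation. Recall that in \fPDD we have $\gamma(e)=1/\deg^-(v)$ for each arc entering $v$, so a set $S$ is \fviable exactly when it is closed under taking prey: if $v\in S$ then $\prey{v}\subseteq S$. Writing $u\preceq v$ whenever $u$ reaches $v$ in \Food, the \fviable sets are precisely the down-sets (order ideals) of $\preceq$, and we seek a down-set of size at most $k$ with $\PD\ge D$. Since the food web is a co-cluster graph, its underlying graph is complete multipartite with parts $C_1,\dots,C_p$; each $C_i$ is an independent set, while any two vertices in distinct parts are joined by an arc and are therefore $\preceq$-comparable. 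Hence every antichain of $\preceq$ lies inside a single part.

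Next I would pin down the order inside a part. The $\preceq$-maximal elements of a down-set form an antichain, so they all lie in one common part. For $c\in C_i$ let its \emph{below-set} be the set of vertices outside $C_i$ that reach $c$. Because every such vertex is comparable to every element of $C_i$, a short cycle argument shows these below-sets are nested: if $w\prec c$, $w\not\prec c'$ and $w'\prec c'$, $w'\not\prec c$ for $c,c'\in C_i$, then $c'\prec w$ and $c\prec w'$, giving the cycle $c'\prec w\prec c\prec w'\prec c'$. The same equivalence yields $c\prec c'$ inside $C_i$ iff the below-set of $c$ is a strict subset of that of $c'$. Thus $\preceq$ restricted to $C_i$ is a weak order: $C_i$ splits into levels $L_1\prec\dots\prec L_{s}$ with same-level elements incomparable and different-level elements comparable, and the closure of any $c\in L_t$ equals its (level-determined) below-set together with all lower levels $L_1\cup\dots\cup L_{t-1}$.

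These facts collapse the search space. The maximal elements of a down-set form an antichain in one part, hence in one level $L_t$; calling them $T_t\subseteq L_t$, the whole down-set is forced to be $F_t\cup T_t$, where $F_t$ is the union of the common below-set of $L_t$ with all lower levels of $C_i$, a set that depends only on the pair $(C_i,t)$. I would therefore compute reachability, recover the parts (as classes of mutual non-adjacency), sort each part into levels by below-set size, and then iterate over the at most $n$ pairs $(C_i,t)$. For each pair with $|F_t|\le k$, I fix $F_t$ as pre-selected, give budget $k-|F_t|$, and pick the best leaves to add from $L_t$; this is a constrained maximum-PD problem on \Tree with $F_t$ forced in and every leaf outside $L_t\cup F_t$ forbidden, which a tree dynamic program in the style of \cite{pardi07} solves in $\Oh(nk)$ time. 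I answer \yes iff some pair attains diversity at least $D$. Reachability costs $\Oh(n(n+m))$ and the $\le n$ tree DPs cost $\Oh(n^2k)$, so the total is $\Oh(n^2k+n(n+m))\subseteq\Oh(nk\cdot(n+m))$.

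The main obstacle is the structural nesting lemma for within-part below-sets: without it, the closures of vertices in one part overlap arbitrarily and choosing which maximal elements to keep degenerates into a coverage/knapsack problem that I see no way to solve in polynomial time. Once nesting is established, each part becomes a weak order, the candidate down-sets reduce to the $\Oh(n)$ forced pairs $(C_i,t)$ plus a top-level leaf selection, and the remaining work—the constrained max-PD dynamic program—is routine.
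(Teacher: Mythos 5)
Your proposal is correct, but it reaches the $\Oh(n)$-fold reduction to \MPD by a different structural route than the paper. The paper simply fixes a topological order $x_1,\dots,x_n$ of \Food and guesses the \emph{first} taxon $x_i$ to die: then the prefix $A_i=\{x_1,\dots,x_{i-1}\}$ must be saved, everything reachable from $x_i$ must die, and the one-line observation that every remaining taxon is a non-neighbour of $x_i$ (hence lies in $x_i$'s part and has all its prey in $A_i$) immediately yields an independent, unconstrained remainder on which Faith's greedy for \MPD is run after an $(A_i,Q_i)$-contraction. You instead characterize \fviable sets as down-sets of the reachability order, prove the nesting lemma for below-sets within a part (your short directed-cycle argument is valid, since vertices in distinct parts of a complete multipartite DAG are always comparable), derive the weak-order/level structure of each part, and enumerate the $\Oh(n)$ pairs (part, level) that can host the maximal antichain of a down-set; the resulting forced set $F_t$ plus a free choice inside $L_t$ again reduces to a constrained \MPD computation. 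Both enumerations exactly cover all \fviable sets (modulo the trivial $S=\emptyset$ and $S=X$ edge cases) and both meet the claimed $\Oh(nk\cdot(n+m))$ bound. What your route buys is a reusable structural picture of complete multipartite DAGs (nested closures, weak orders per part) that makes the candidate solutions canonical; what the paper's route buys is brevity, since the topological-order guess makes the forced/free/dead split fall out with no nesting lemma at all. Your identification of that lemma as the crux is accurate --- it is precisely the step the paper's ordering trick lets one skip.
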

}
\lemfCC{\label{lem:f-co-cluster}}
\newcommand{\CorrectnessLemfCC}{
	\proofpara{Correctness}
	Let~$S$ be a solution for~\Instance and consider the computed topology.
	Let~$x_i$ be the taxon of~$X\setminus S$ such that $A_i \subseteq S$.
	As~$x_i \not\in X\setminus S$ and $S$ is \fviable if and only if $X_{\le x} \subseteq S$ for each~$x\in S$~\cite{fPDDKernel}, $Q_i \cap S = \emptyset$. 
	Define~$S' := S \setminus A_i \subseteq X_i$ and observe~$|S'| = |S| - |A_i| \le k - |A_i|$ and $\PDsub{\Tree_i}(S') = \PD(S) - \PD(A_i) \ge D - \PD(A_i)$.
	Thus, $S'$ is a solution for~$\Instance_i$ and the algorithm returns \yes.
	
	Conversely, if there is a taxon~$x_i$ such that~$\Instance_i$ is a \yes-instance of \MPD with solution~$S_i$, then by analogous argument, $S_i \cup A_i$ is a solution for \Instance.
	
	\proofpara{Running Time}
	For each taxon~$x_i$, the sets~$A_i$ and~$Q_i$ can be computed in time~$\Oh(n+m)$.
	Faith's Algorithm for computing \MPD takes $\Oh(n \cdot k)$ time~\cite{steel,Pardi2005}.
	So, the overall running time is~$\Oh(n\cdot (n+m)\cdot k)$.
}

\newcommand{\ProofLemfCC}[1]{
	\begin{proof}
		%
		\looseness=-1
		\proofpara{Algorithm}
		Let an instance~$\Instance := (\Tree, \Food, k, D)$ of \fPDD be given, where~\Food is a co-cluster graph.
		Compute a topological order~$x_1,\dots,x_n$ of~\Food.
		Iterate over taxa~$x_i \in X$.
		We want~$x_i$ to be the first taxon to die out.
		By definition, the set~$A_i = \{x_1,\dots,x_{i-1}\}$ survives and the set~$Q_i$ of taxa reachable from~$x_i$\todos{Define as containing~$x_i$.} dies out.
		Observe that~$X_i := X \setminus (A_i \cup Q_i)$ are not neighbors of~$x_i$ in~\Food and so, as~\Food is a co-cluster,~$\Food[X_i]$ is an independent set.
		Let~$\Tree_i$ be the~$(A_i,Q_i)$-contraction of~$\Tree$.
		
		Return \yes, if~$\Instance_i := (\Tree_i,k-|A_i|,D-\PD(A_i))$ is a \yes instance of \MPD.
		Otherwise, continue with the next taxon.
		After the iteration, return \no.
		
		#1
	\end{proof}
}
\ProofLemfCC{\textit{The detailed correctness and running time is deferred to the appendix.}}

\looseness=-1
\begin{theorem}
	\label{thm:f-co-cluster}
	Instances~$\Instance := (\Tree, \Food, k, D)$ of \fPDD can be solved in~$\Oh(2^{|M|} \cdot nk\cdot (n+m))$ time if a set~$M\subseteq X$ is given such that~$\Food - M$ is a co-cluster graph.
\end{theorem}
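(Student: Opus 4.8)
The plan is to mirror the proof of \Cref{thm:f-cluster} almost verbatim, replacing the cluster base case (\Cref{lem:f-cluster}) with the co-cluster base case (\Cref{lem:f-co-cluster}). First I would iterate over all subsets $Y \subseteq M$, interpreting $Y$ as exactly the taxa of $M$ that a solution saves, while $M \setminus Y$ is forced to go extinct. For a fixed $Y$, I define $R_Y$ to be the set of all taxa that can reach $Y$ in $\Food$ (the prey-closure of $Y$) and $Q_Y$ to be the set of all taxa reachable from $M \setminus Y$ (the predator-closure of $M \setminus Y$). In \fPDD every saved taxon needs all of its prey preserved, so saving $Y$ compels saving $R_Y$; symmetrically, letting $M \setminus Y$ die forces everything in $Q_Y$ to die. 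If $R_Y \cap Q_Y \neq \emptyset$, the guess is inconsistent and I discard it; otherwise I solve the reduced instance $\Instance' := (\Tree - (R_Y \cup Q_Y),\ \Food - (R_Y \cup Q_Y),\ k - |R_Y|,\ D - \PD(R_Y))$ with \Cref{lem:f-co-cluster}, returning \yes if any subinstance is a \yes-instance and \no after the whole loop.

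The one point that needs checking beyond the cluster case is that the reduced food web is still a co-cluster so that the base case applies. Since $Y \subseteq R_Y$ and $M \setminus Y \subseteq Q_Y$, we have $M \subseteq R_Y \cup Q_Y$, so $\Food - (R_Y \cup Q_Y)$ is an induced subgraph of $\Food - M$; as co-cluster graphs are hereditary (the complement of an induced subgraph is an induced subgraph of the complement, and cluster graphs are closed under induced subgraphs), the residual food web remains a co-cluster and \Cref{lem:f-co-cluster} is applicable.

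For correctness I would invoke \Cref{lem:sub-solutions} exactly as in \Cref{thm:f-cluster}. The closure definitions match its hypotheses precisely: no taxon outside $R_Y$ reaches $R_Y$, and $Q_Y$ reaches nothing outside itself. Given a genuine solution $S$ of $\Instance$, setting $Y := S \cap M$ yields $R_Y \subseteq S$ and $Q_Y \cap S = \emptyset$, so $S \setminus R_Y$ is a solution of $\Instance'$, which the loop detects. Conversely, if some $\Instance'$ has a solution $S$, then \Cref{lem:sub-solutions} guarantees $S \cup R_Y$ is \fviable in $\Food$, while the budget and diversity accounting $|S \cup R_Y| \le k$ and $\PD(S \cup R_Y) = \PD(S) + \PD(R_Y) \ge D$ go through verbatim.

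Finally, for the running time the outer loop runs $2^{|M|}$ times; for each $Y$ the closures $R_Y, Q_Y$ are obtained by a reachability sweep in $\Oh(n + m)$ time, and the reduced co-cluster instance is solved in $\Oh(nk(n+m))$ time by \Cref{lem:f-co-cluster}, giving the claimed $\Oh(2^{|M|} \cdot nk (n+m))$ total. I expect no real obstacle here---the argument is a direct transcription of \Cref{thm:f-cluster}; the only genuinely new verification is the hereditary-closure observation that keeps the residual food web a co-cluster, so that \Cref{lem:f-co-cluster} rather than \Cref{lem:f-cluster} can serve as the base case.
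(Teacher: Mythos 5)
Your proposal is correct and follows exactly the route the paper takes (and only sketches): iterate over subsets $Y\subseteq M$, remove the closure $R_Y\cup Q_Y$, and solve the residual instance with \Cref{lem:f-co-cluster}, using \Cref{lem:sub-solutions} for correctness as in \Cref{thm:f-cluster}. Your explicit verification that $M\subseteq R_Y\cup Q_Y$ and that co-cluster graphs are hereditary, so the residual food web is indeed a co-cluster, is a detail the paper leaves implicit but is exactly the right point to check.
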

\Cref{thm:f-co-cluster} is proven similar to \Cref{thm:f-cluster}.
We iterate over subsets~$Y$ of~$M$ and want that~$Y$ are the taxa that are surviving, while~$M \setminus Y$ do not survive.
After removing the taxa which can reach~$Y$ or which can be reached from~$M\setminus Y$, the food web is a co-cluster and a solution can be found with~\Cref{lem:f-co-cluster}.
%
%
%

\subsection{Treewidth}
\label{sec:f-treewidth}
In the following, we show that \fsPDD is \FPT with respect to the treewidth~$\tw$ of~\Food.
We use a coloring on the vertices to indicate whether a taxon is saved or not.
This approach is similar to the one used in~\cite{PDD}, to show that \sPDD is \FPT when parameterized with $\tw$.
\ifJournal
Since \PDD and \fPDD are \NP-hard even if the food web is a directed tree, not much hope remains that these algorithms can be generalized.
We do not define tree-decompositions.
\fi
Common definitions can be found in~\cite{treedecom,cygan}.

\newcommand{\thmfTW}[1]{
	\begin{theorem}[$\star$]
		#1
		Instances~$\Instance := (\Tree, \Food, k, D)$ of \fsPDD can be solved in~$\Oh(2^\tw \tw \cdot nk^2)$ time if  a nice tree-decomposition~$T$ of~%
		\ifJournal
		$\Food=(V_\Food,E_\Food)$
		\else
		$\Food$
		\fi
		with tree\-width~$\tw$ is given.
	\end{theorem}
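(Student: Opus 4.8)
The plan is to specialize the dynamic program of \Cref{thm:h-tw} to the case $\gamma_x = |\prey{x}|$, exploiting that the ``all prey'' requirement is a purely combinatorial, per-edge condition. Since \Tree is a star, $\PD(Y) = \sum_{x\in Y}\w(\rho x)$ is additive over the selected taxa, so a solution is just a maximum-weight \fviable set $S$ of size at most~$k$; and because $\gamma_x=|\prey{x}|$, being \fviable means being \emph{closed} under taking prey, i.e.\ for every edge $uv \in E(\Food)$ (with $u$ prey of $v$) we must have that $v\in S$ implies $u\in S$.

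First I would set up a table $\DP[t,A,s]$ over the given nice tree-decomposition~$T$, where $t$ is a node with bag $Q_t$, the set $A\subseteq Q_t$ records which taxa of the bag are selected, and $s\in[k]_0$ is the total number of selected taxa in $V_t$, the union of the bags in the subtree rooted at~$t$. The entry stores the maximum of $\PD(Y)$ over sets $Y\subseteq V_t$ with $Y\cap Q_t = A$ and $|Y|=s$ that respect closure for every edge both of whose endpoints have already been introduced. The crucial difference to \Cref{thm:h-tw} is that a two-coloring $A\subseteq Q_t$ replaces the counting function $f\colon Q_t\to[\Wmax]_0$: since a selected taxon needs \emph{all} rather than merely $\gamma_x$ of its prey, each prey-predator edge can be verified once as a Boolean check and no prey-counter has to be carried in the bag. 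This is exactly what reduces the $\Wmax^{2\tw}$ factor to $2^\tw$.

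For the recurrences, a leaf node stores $\DP[t,\emptyset,0]=0$ and $-\infty$ otherwise. At an introduce node adding $v$ (child $t'$, $Q_t=Q_{t'}\cup\{v\}$) I would set $\DP[t,A,s]=\DP[t',A,s]$ when $v\notin A$ and $\DP[t,A,s]=\DP[t',A\setminus\{v\},s-1]+\w(\rho v)$ when $v\in A$, but replace the value by $-\infty$ unless the edges between $v$ and $Q_{t'}$ are consistent with closure, i.e.\ no prey $u\in\prey{v}\cap Q_{t'}$ has $v\in A$ and $u\notin A$, and no predator $w\in\predators{v}\cap Q_{t'}$ has $w\in A$ and $v\notin A$. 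A forget node dropping $v$ takes $\max\{\DP[t',A,s],\DP[t',A\cup\{v\},s]\}$; no further test is needed, because by the connectivity of the bags every edge incident to $v$ has already been introduced below. A join node combines its children by $\DP[t,A,s]=\max_{s'}\DP[t_1,A,|A|+s']+\DP[t_2,A,|A|+s-s']-\PD(A)$, where the subtraction corrects the doubly counted diversity of the shared bag; this is sound precisely because the star tree makes $\PD$ additive and because there are no edges between $V_{t_1}\setminus Q_t$ and $V_{t_2}\setminus Q_t$. Finally I return \yes iff $\DP[r,\emptyset,k]\ge D$ at the root~$r$.

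For the running time, the table has $\Oh(2^\tw\cdot nk)$ entries, and leaf, introduce, and forget nodes cost $\Oh(\tw)$ each to inspect the incident edges; the join node dominates, as for each of the $2^\tw$ colorings it iterates over the split $s'$ of the size budget, costing $\Oh(2^\tw\cdot k)$ per entry and $\Oh(2^\tw\tw\cdot nk^2)$ in total. The main obstacle I anticipate is the bookkeeping that every prey-predator edge is tested exactly once and never slips through a join: one must argue that enforcing closure at the introduce node of the later-introduced endpoint, together with the absence of cross-edges at join nodes, already certifies global closure of~$Y$. Once that invariant is pinned down, the correctness of the individual recurrences follows along the same lines as in \Cref{thm:h-tw}.
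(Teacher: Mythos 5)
Your proposal matches the paper's proof essentially verbatim: the same table $\DP[t,A,s]$ indexed by the selected subset $A$ of the bag and the partial solution size $s$, the same Boolean closure checks at introduce nodes (replacing the prey-counting function of the \rwsPDD algorithm), the same forget and join recurrences with the $-\PD(A)$ double-counting correction, and the same running-time accounting dominated by the join nodes. The only cosmetic difference is that you decrement the size index to $s-1$ when introducing a selected taxon, which is in fact the cleaner bookkeeping compared to the paper's stated recurrence.
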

}
\thmfTW{\label{thm:f-tw}}
\newcommand{\ProofThmfTW}{
	\begin{proof}
		Let $\Instance = (\Tree, \Food, k, D)$ be an instance of \fsPDD.
		We define a dynamic programming algorithm with table~$\DP$ over the given tree-decomposition~$T$ of~$\Food=(V_\Food,E_\Food)$.
		
		For a node~$t\in T$, let $Q_t$ be the bag associated with~$t$ and let $V_t$ be the union of bags in the subtree of~$T$ rooted a~$t$.
		
		\proofpara{Table Definition}
		Given a bag~$t$, a set of taxa~$A \subseteq Q_t$, and an integer~$s$,
		a set~$Y\subseteq V_t$ is \emph{$(t,A,s)$-feasible}, if
		\begin{enumerate}
			\item[(T1)]\label{it:color}$A$ is the subset of~$Y$ in~$Q_t$; formally~$A = Y \cap Q_t$. 
			\item[(T2)]\label{it:safeness}$Y$ contains all prey of~$Y$ in~$V_t$; formally~$\prey{Y} \cap V_t = Y$.
			\item[(T3)]\label{it:size}The size of $Y$ is $s$; formally~$|Y|=s$.
		\end{enumerate}
		
		Let~$\SSS_{t,A,s}$ be the set of $(t,A,s)$-feasible sets.
		In table entry~$\DP[t,A,s]$\todos{Remove $A$ in Food-webs. "In table entry DP[t, A, R, G, B, s], we store the largest diversity"},
		store~$\max_{Y\in \SSS_{t,A,s}} \PD(Y)$.
		Let $r$ be the root of the tree-decomposition $T$.
		Then, $\DP[r,\emptyset,k]$ stores the diversity of a solution for \Instance.
		So, return~\yes if $\DP[r,\emptyset,k]\ge D$, and \no otherwise.

		\proofpara{Leaf Node}
		For a leaf~$t$ of~$T$ the bags~$Q_t$ and $V_t$ are empty. We store
		\begin{eqnarray} \label{tw:leaf}
			\DP[t,\emptyset,0] &=& 0.
		\end{eqnarray}
		For all other values, we store $\DP[t,R,G,B,s] = -\infty$.
		
		\Recc{tw:leaf} is correct by definition.

		\proofpara{Introduce Node}
		Let~$t$ be an \emph{introduce node}, that is,~$t$ has a single child~$t'$ with~$Q_t = Q_{t'} \cup \{v\}$.
		
		If~$v\in A$ and $\prey{v} \cap Q_t \subseteq A$,
		store~$\DP[t,A,s] = \DP[t',A\setminus \{v\},s] + \PD(v)$.
		
		If~$v\not\in A$ and $\predators{v} \cap A = \emptyset$,
		store~$\DP[t,A,s] = \DP[t',A,s]$.
		
		Otherwise, if~$v\in A$ and $(\prey{v} \cap Q_t)\setminus A \ne \emptyset$, or
		if~$v\not\in A$ and $\predators{v}\cap A \ne \emptyset$,
		then store~$\DP[t,A,s] = -\infty$.

		$v$ can only be added to~$A$ if all prey are in~$A$.
		Likewise, if~$v$ is not added to $A$, then no predator can be in~$A$.

		\proofpara{Forget Node}
		Let~$t$ be a \emph{forget node}, that is,~$t$ has a single child~$t'$ and~$Q_t = Q_{t'} \setminus \{v\}$.
		We store
		\begin{eqnarray} \label{tw:forget}
			\DP[t,A,s] &=& \max\{
			\DP[t',A\cup\{v\},s];
			\DP[t',A,s]
			\}.
		\end{eqnarray}

		Define sets~$\SSS_{t,A,s}^{v} := \{ Y\in \SSS_{t,A,s} \mid v\in Y \}$ and~$\SSS_{t,A,s}^{-v} := \{ Y\in \SSS_{t,A,s} \mid v\not\in Y \}$.
		The correctness of \Recc{tw:forget} follows from the observation that~$\SSS_{t,A,s}^{v}$ and~$\SSS_{t,A,s}^{v}$ are a disjoint union of~$\SSS_{t,A,s}$; and that~$\DP[t',A\cup\{v\},s] = \max_{Y\in \SSS_{t,A,s}^{v}} \PD(Y)$, $\DP[t',A,s] = \max_{Y\in \SSS_{t,A,s}^{-v}} \PD(Y)$, and~$\DP[t,A,s] = \max_{Y\in \SSS_{t,A,s}} \PD(Y)$.

		\proofpara{Join Node}
		Let~$t$ be a \emph{join node}, that is,~$t$ has two children~$t_1$ and~$t_2$ with~$Q_t = Q_{t_1} = Q_{t_2}$.
		We store
		\begin{eqnarray} \label{tw:join}
			\DP[t,A,s] &=& \max_{s' \in [s-|A|]_0} \DP[t_1,A,|A|+s'] + \DP[t_2,A,|A|+s-s'] - \PD(A).
		\end{eqnarray}\todos{Add $|A|$ also in 1-PDD paper.}
		
		The correctness of \Recc{tw:join} follows from the fact that there are no edges between~$V_{t_1} \setminus Q_t$ and~$V_{t_2} \setminus Q_t$.
		Because~\Tree is a star, we can simply add the phylogenetic diversities together.

		\proofpara{Running Time}
		A tree decomposition contains $\Oh(n)$ nodes, thus the table contains $\Oh(2^\tw \cdot nk)$ entries.
		Any node can be computed in time linear in~$k$ and~$\tw$.
		Therefore, the overall running time is~$\Oh(2^\tw \tw \cdot nk^2)$.
	\end{proof}
}

\section{Discussion}
\label{sec:discussion}
In this paper, we defined \wPDD, a problem considering weighted food webs in the context of phylogenetic diversity maximization, as well as three special cases, \rwPDD, \fPDD, and \hPDD.
We analyzed these problems in the light of parameterized complexity for structural parameters of the food web
and presented several \XP-algorithms for \rwsPDD and several \FPT-algorithms for \fsPDD.
It is a somewhat surprising observation that for
\ifJournal
the considered parameters categorizing the structure of the food web,
\else
these parameters,
\fi
\fPDD and \fsPDD have the same complexity as \PDD and \sPDD.

It remains open whether \hPDD can be solved in polynomial time on instances where the food web is a clique and whether some of the presented \XP-algorithms
for the vertex cover number, distance to cluster, or treewidth of the food web
can be improved to \FPT-algorithms.

Some biological applications consider species interaction that generalizes one-on-one interactions~\cite{battiston2020networks}, which may be represented with a hypergraph~\cite{golubski2016ecological}.
We wonder how such interactions could be modeled in the context of maximization of phylogenetic diversity and whether such problems can be solved efficiently.

Another recent line of research is defining phylogenetic diversity in phylogenetic networks~\cite{WickeFischer2018,bordewich2022complexity,MAPPD,MaxNPD,AVGTree}.
So far, these concepts are considered without considering biological interactions.
We expect a combination of these concepts to result in very hard problems, as \PDD is already hard if the phylogenetic tree and the food web are elementary trees and most definitions of phylogenetic diversity for networks are already hard on easy network structures.
Yet, future research may identify special cases where efficient algorithms are feasible.%
\ifJournal
\footnote{Shortly after this paper has been written, Jones and Schestag presented several \FPT algorithms and a full complexity dichotomy for phylogenetic diversity on networks measured by the \textit{all-paths-PD} measure and considering ecological constraints with \viable and \fviable sets of taxa~\cite{MAPPDviability}.}
\fi

\thispagestyle{empty}
\bibliographystyle{abbrv}
\ifArXiv
\else
\bibliography{ref}
\fi

\setcounter{section}{0}
\renewcommand\thesection{A.\arabic{section}}

\renewcommand\thesection{A}
\section{Appendix}
\renewcommand\thesection{A.\arabic{section}}
\setcounter{section}{0}

\section{Proof of \Cref{lem:sub-solutions}}
\renewcommand\thelemma{\ref{lem:sub-solutions}}
\lemSubSolutions{}
\ProofLemSubSolutions{}

\section{Proof of \Cref{thm:weighted-PDD}}
\renewcommand\thetheorem{\ref{thm:weighted-PDD}}
\thmWeighted{}
\ProofThmWeighted{\CorrectnessThmWeighted}

\section{Proof of \Cref{lem:h-vc}}
\renewcommand\thelemma{\ref{lem:h-vc}}
\lemhHVC{}
\ProofLemhHVC{}

\ifJournal
\section{Proof of \Cref{cor:max-leaf-number}}
\renewcommand\thelemma{\ref{cor:max-leaf-number}}
\corMaxLeaf{}
\ProofCorMaxLeaf
\fi

\section{Proof of \Cref{thm:h-vc}}
\renewcommand\thetheorem{\ref{thm:h-vc}}
\thmhHVC{}
\ProofThmhHVC{}

\section{Proof of \Cref{thm:h-cluster}}
\renewcommand\thetheorem{\ref{thm:h-cluster}}
\thmHCluster{}
\ProofThmHCluster{}

\section{Proof of \Cref{thm:h-tw}}
\renewcommand\thetheorem{\ref{thm:h-tw}}
\thmHTW{}
\ProofThmHTW{}

\section{Proof of \Cref{lem:f-co-cluster}}
\renewcommand\thelemma{\ref{lem:f-co-cluster}}
\lemfCC{}
\ProofLemfCC{\CorrectnessLemfCC}

\section{Proof of \Cref{thm:f-tw}}
\renewcommand\thetheorem{\ref{thm:f-tw}}
\thmfTW{}
\ProofThmfTW{}


\end{document}